\newtheorem{lmn}{Lemma}
\newtheorem{thrm}{Theorem}
\renewcommand{\vec}[1]{\mathbf{#1}}
\renewcommand*\env@matrix[1][*\c@MaxMatrixCols c]{%
   \hskip -\arraycolsep
   \let\@ifnextchar\new@ifnextchar
   \array{#1}}
\begin{document}
	
\title{On the waiting time in quantum repeaters with probabilistic entanglement swapping}
\author{E. Shchukin}
\email{evgeny.shchukin@gmail.com}
\author{F. Schmidt}
\email{fschmi@students.uni-mainz.de}
\author{P. van Loock}
\email{loock@uni-mainz.de}
\affiliation{Johannes-Gutenberg University of Mainz, Institute of Physics, Staudingerweg 7, 55128 Mainz}

\begin{abstract}
The standard approach to realize a quantum repeater relies upon probabilistic but heralded entangled state manipulations
and the storage of quantum states while waiting for successful events. In the literature on this class of repeaters,
calculating repeater rates has typically depended on approximations assuming sufficiently small probabilities. Here we
propose an exact and systematic approach including an algorithm based on Markov chain theory to compute the average
waiting time (and hence the transmission rates) of quantum repeaters with arbitrary numbers of links. For up to four
repeater segments, we explicitly give the exact rate formulae for arbitrary entanglement swapping probabilities.
Starting with three segments, we explore schemes with arbitrary (not only doubling) and dynamical (not only
predetermined) connections. The effect of finite memory times is also considered and the relative influence of the
classical communication (of heralded signals) is shown to grow significantly for larger probabilities. Conversely, we
demonstrate that for small swapping probabilities the statistical behavior of the waiting time in a quantum repeater
cannot be characterized by its average value alone and additional statistical quantifiers are needed. For large repeater
systems, we propose a recursive approach based on exactly but still efficiently computable waiting times of sufficiently
small sub-repeaters. This approach leads to better lower bounds on repeater rates compared to existing schemes.
\end{abstract}
    
\pacs{03.67.Mn, 03.65.Ud, 42.50.Dv}

\maketitle

\section{Introduction}

Quantum repeaters are essential ingredients for large-scale, fiber-based quantum networks because of the exponential
decay of photonic quantum information along the optical communication channels. Besides recent all-optical approaches
relying on experimentally still demanding quantum error correction procedures, memory-based quantum repeaters remain
good candidates to realize long-range quantum communication \cite{nphotonics.10.381, PhysRevLett.119.170502,
nphys.11.695, fqm, ncomm.8.15359, nphys.14.50, ncomm.9.363, PhysRevLett.120.183602, arXiv.1808.05393}. In such quantum
repeaters, probabilistic events like, especially, the heralded distribution of entangled Bell states over sufficiently
small elementary channel segments become independent through the use of sufficiently long-lasting quantum memories, thus
circumventing the exponential scaling when the segments are connected. Remarkably, the precise average waiting time, and
hence the communication rate in a non-deterministic quantum repeater is unknown and computing this time has been a
long-standing problem \cite{RevModPhys.83.33}. An exact analytic solution only exists for the special case of
deterministic entanglement swapping; otherwise any existing work on quantum repeaters relies upon approximations or
numerical simulations. In our work, we present a systematic scheme, based on a Markov formalism\footnote{For earlier
treatments of quantum repeaters and distributed quantum computation using the formalism of Markov chains, see
\cite{10.1117.12.811880} and \cite{PhysRevA.76.062323}, respectively.} and linear equation solving, that enables one to
obtain the exact waiting times in practically relevant regimes. For a repeater as large as 1024 segments, we demonstrate
that the trade-off between the computational efficiency and the prediction accuracy can be dealt with in a modular
approach by recursively applying exactly but still efficiently computable sub-chains like 32 times 32 segments. This
provides increasingly more accurate but still computable lower bounds on the repeater rates. Furthermore, our method
allows one to easily compute the full probability distribution of the random waiting time for non-deterministic quantum
repeaters including those with arbitrary (not only doubling) and dynamical (not only predetermined) connections and
including classical communication and finite memory effects.  

Quantum repeaters enable one, in principle, to extend (optical-fiber-based) quantum communication schemes such as
quantum cryptography to distances as large as 1000 km and beyond despite channel losses that typically increase
exponentially with distance. Thanks to some recent results on the bounds for point-to-point communication
\cite{Nature.Comm.5.5235, Nature.Comm.8.15043}, there are now well-defined benchmarks for long-distance quantum
communication. These bounds can be explored in terms of a secret key rate per mode (per channel use) and, when ignoring
all imperfections besides channel losses, this corresponds to an optimal raw qubit transmission rate without
intermediate stations like in a quantum repeater \footnote{A brief chronological summary of the different contributions
to the secret-key agreement capacity and the corresponding bounds can be found in Ref.~\cite{arXiv.1708.07142}}.

Since the standard approach to quantum repeaters is based on quantum memories and on (at least partially) probabilistic
operations on entangled states (distribution, swapping, and purification) \cite{PhysRevLett.81.5932, *[{for recent
alternative approaches that do not require quantum memories and are entirely based on quantum error correction codes
(so-called third-generation repeaters), see }] Scientific.Reports.6.20463, *[{for yet another scheme that is based upon
a deterministic method of entanglement purification, see }] PhysRevLett.120.030503}, some recent proposals consider
small-scale versions of quantum repeaters with a minimal number of memory stations (repeater links) \cite{Luong2016,
PhysRevA.96.052313, 2058-9565-3-3-034002, PhysRevA.89.012303}. While the rate analysis for the smallest repeater with
only two segments and one link is fairly straightforward, quantum repeaters with two or more links become increasingly
complex to analyze when the entanglement swapping is probabilistic. Indeed there is no explicit expression for the
average repeater waiting time in the literature for such advanced cases \cite{*[{In }] [{, methods to calculate rates in
memory-based quantum repeaters have been presented with a particular focus on imperfect quantum memories.}] phd}.
However, entanglement swapping based on heralded but non-deterministic Bell measurements is a rather natural situation,
especially in one of the most prominent approaches based on atomic ensembles and linear optics \cite{Nature.414.413}
where normally even the ideal (photonic) Bell measurement cannot exceed an efficiency of 1/2 \cite{Calsamiglia2001}.
Thus, so far, the typical approximate rate formulas that have been applied depended on the assumption of sufficiently
small probabilities \cite{RevModPhys.83.33, PhysRevA.92.012307, PhysRevA.87.052315, PhysRevA.87.062335}. Nonetheless,
the most efficient memory-based quantum repeater schemes would rely on high swapping probabilities, for instance, based
on suitable atom-light interactions \cite{Uphoff2016} or enhanced linear-optics Bell measurements
\cite{PhysRevA.84.042331, PhysRevLett.110.260501, PhysRevLett.113.140403}. While exact analytical rate formulas are
known only for the extreme case of fully deterministic entanglement swapping \cite{PhysRevA.83.012323}, in the present
work, we will address the entire range of arbitrary swapping probabilities including a full statistical analysis beyond
only average values. For large-scale repeaters, we propose to divide the whole system into smaller sub-repeater chains,
still sufficiently big to maintain the great accuracy of our formalism and sufficiently small to be efficiently
computable. Errors that occur during the long-distance entanglement distribution in a quantum repeater can be included
into the Markov-chain formalism. Here the focus will be on the distribution times, which, for example, in the context of
quantum key distribution would be related to a quantum repeater raw rate \footnote{In order to calculate a secret key
rate in quantum repeater-based communication, besides the raw rate, inclusion of errors requires determining the quantum
bit error rate. Vinay and Kok \cite{PhysRevA.99.042313}, building upon an earlier version of our work, explicitly showed
already for a particular repeater protocol how to incorporate errors into the Markov-chain approach}.

The paper is structured as follows. In Sec.~II we introduce the general setting and review existing results on the rate
analysis in quantum repeaters. Sec.~III then discusses some commonly used approximations and Sec.~IV gives a detailed
introduction into the Markov-chain formalism for quantum repeaters. Explicit examples of small repeaters and some
special cases are presented in Sec.~V. How to incorporate the effects of finite memory and classical communication times
into our rate analysis is described in Secs.~VII and VIII, respectively. An alternative, complementary approach based on
generating functions is introduced in Sec. IX and a possible way to compress the Markov chain describing a quantum
repeater system in order to make the analysis more efficient is discussed in Sec.~X on lumpability. Finally, Sec.~XI
briefly describes a numerical validation of some of our analytical results, Sec. XII treats larger repeater systems
based on a recursive application of our exact rate formulas, and Sec.~XIII concludes the paper.

\section{General setting and known results}

\begin{figure}
    \includegraphics{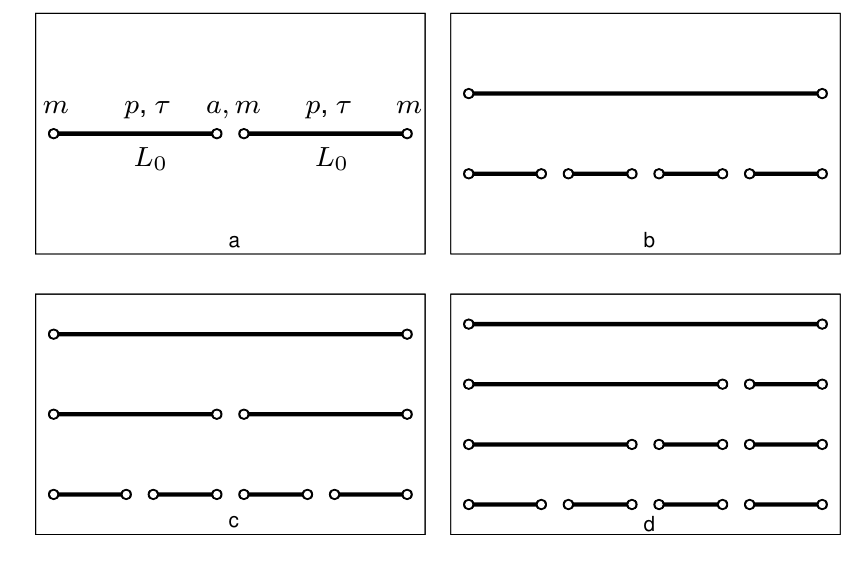}
    \caption{Different ways for entanglement distribution in a quantum repeater. Fig.~{\fontfamily{phv}\selectfont a}
    illustrates the basic parameters of our model (see main text). Figs. b-d describe various distinct situations for
    the entanglement swapping (also see main text).}\label{fig:qr}
\end{figure}

Let us consider a quantum repeater with $n$ identical segments of length $L_0$ and $n-1$ swappings between them. The
initial entanglement distribution success probability is denoted by $p$ and the swapping success probability by $a$. The
former includes the local state preparation efficiency and the channel transmission probability that decays
exponentially with distance. The latter may include local losses such as (heralded) memory erasures. The characteristic
time unit $\tau = L_0/c$ is the typical duration of a single distribution attempt (including the transmission times of
quantum and classical signals between neighboring repeater stations, and more specifically depending on the particular
repeater protocol). To distribute entanglement along the whole repeater we first distribute it in individual segments
and then combine them with swapping. If a swapping fails, or if a state is kept in memory longer than $\sim m$ time
units $\tau$, the affected segments are zeroed out and have to start entanglement distribution from the very beginning.
In Fig.~\ref{fig:qr}a we show two segments of such a quantum repeater. Figs.~\ref{fig:qr}b-\ref{fig:qr}c illustrate
schemes based on deterministic and probabilistic entanglement swapping, respectively, while the latter has the fixed
rule of doubling the distance on every nesting level. Fig.~\ref{fig:qr}d is more general, no longer imposing such rules.

Generally, with probability $a^{n-1} p^n$ we succeed in one step with no gain over direct state transmission. In order
to gain efficiency and change the scaling, we have to exploit quantum memories and allow to perform several steps to
successfully distribute entanglement over the whole repeater. We introduce the random variable $K_n$ which is equal to
the number of steps taken to successful distribution over the total distance $L = n L_0$ and, when multiplied with
$\tau$, corresponds to the total waiting time (on higher nesting levels different time units may apply, see Sec.~VIII).
The goal of this work is to study statistical properties of $K_n$. 

The main characteristic of any random variable is its probability distribution function (PDF). Up to now hardly anything
has been known about the full PDF of $K_n$ in the most general setting with arbitrary (including still relevant) $p$ and
$a$, although the waiting time in a single segment is obviously geometrically distributed with probability $pq^{k-1}$
for success at the $k$th attempt (where $q = 1-p$). A few results about the average $\overline{K}_n$ have been obtained.
The expression for the average waiting time of a two-segment repeater with non-deterministic swapping and ``memory
cutoff'' of $m$ time units has been obtained in Ref.~\cite{PhysRevLett.98.060502} and reads as
\begin{equation}\label{eq:Kfm}
    \overline{K}^{(m)}_2 = \frac{1+2q - 2q^{m+1}}{ap(1+q - 2q^{m+1})}.
\end{equation} 
In the case of deterministic swapping, $a=1$, and infinite memory, $m = +\infty$, the exact expression for
$\overline{K}_n$ for arbitrary $n$ was presented in Ref.~\cite{PhysRevA.83.012323} (Fig.~\ref{fig:qr}b illustrates such
a repeater; since swappings cannot fail, there are no additional levels). It reads as
\begin{equation}\label{eq:k1}
    \overline{K}_n = \sum^n_{j=1} (-1)^{j+1} \binom{n}{j} \frac{1}{1-q^j}.
\end{equation}
The expression for the average waiting time of an $n$-segment quantum repeater with deterministic swapping and a finite
memory cutoff of $m$ time units has been obtained in \cite{arXiv.1309.3407} and it is given by the following expression:
\begin{equation}\label{eq:Knm}
    \overline{K}^{(m)}_n = \frac{1-(1-q^m)^n + (1-q^n)\left(m - \sum\limits^{m-1}_{i=1}(1-q^i)^n\right)}{(1-q^{m+1})^n - q^n(1-q^m)^n}.
\end{equation}
We thus have three expressions independently obtained by different methods, so we need to verify that they are in
agreement where applicable. Simple algebra shows that the expression \eqref{eq:k1} for $n=2$ coincides with the
expression \eqref{eq:Kfm} for $a=1$ and $m=+\infty$ and that the expression \eqref{eq:Kfm} for $a=1$ coincides with the
expression \eqref{eq:Knm} for $n=2$. Verification that the expression \eqref{eq:Knm} for $m=+\infty$ coincides with the
expression \eqref{eq:k1} requires slightly more work.
\begin{lmn}
    The expressions \eqref{eq:k1} and \eqref{eq:Knm} are consistent for the case of infinite memory cutoff:
    $\overline{K}^{(+\infty)}_n = \overline{K}_n$.
\end{lmn}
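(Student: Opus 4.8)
The plan is to compute $\lim_{m\to+\infty}\overline{K}^{(m)}_n$ directly from \eqref{eq:Knm} and check that it reproduces \eqref{eq:k1}. Since $0<q<1$ we have $q^m\to 0$, so every geometric series in $q$ appearing below converges absolutely and limits may be taken termwise.

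First I would dispose of the denominator: both $(1-q^m)^n$ and $(1-q^{m+1})^n$ tend to $1$, so the denominator of \eqref{eq:Knm} tends to $1-q^n$. In the numerator the term $1-(1-q^m)^n$ vanishes in the limit, leaving only $(1-q^n)\bigl(m-\sum_{i=1}^{m-1}(1-q^i)^n\bigr)$. The decisive manipulation is the telescoping rewrite
\begin{equation}
    m-\sum_{i=1}^{m-1}(1-q^i)^n = 1 + \sum_{i=1}^{m-1}\bigl[1-(1-q^i)^n\bigr],
\end{equation}
after which I expand $1-(1-q^i)^n=\sum_{k=1}^{n}(-1)^{k+1}\binom{n}{k}q^{ik}$ by the binomial theorem, exchange the finite $k$-sum with the $i$-sum, and let $m\to+\infty$ using $\sum_{i\ge 1}q^{ik}=q^k/(1-q^k)$. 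Dividing the limiting numerator by the limiting denominator $1-q^n$ then gives
\begin{equation}
    \overline{K}^{(+\infty)}_n = 1 + \sum_{k=1}^{n}(-1)^{k+1}\binom{n}{k}\frac{q^k}{1-q^k}.
\end{equation}

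To close the argument I would bring \eqref{eq:k1} into the same shape: writing $1/(1-q^j)=1+q^j/(1-q^j)$ splits $\overline{K}_n$ into $\sum_{j=1}^{n}(-1)^{j+1}\binom{n}{j}$ plus $\sum_{j=1}^{n}(-1)^{j+1}\binom{n}{j}q^j/(1-q^j)$. The first sum equals $1$ by the standard identity $\sum_{j=0}^{n}(-1)^{j}\binom{n}{j}=0$ with the $j=0$ term separated off, and the second is precisely the series obtained above. Hence $\overline{K}^{(+\infty)}_n=\overline{K}_n$.

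The one place deserving care is the telescoping step and the verification that the superficially $m$-linear contribution $(1-q^n)\,m$ in the numerator is cancelled exactly, leaving a convergent residual series rather than a divergence; once that bookkeeping is settled, the remaining ingredients (binomial expansion, summation of the geometric series, and the alternating binomial-sum identity) are entirely routine.
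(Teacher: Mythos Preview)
Your proof is correct and follows essentially the same route as the paper: reduce to the limit of $m-\sum_{i=1}^{m-1}(1-q^i)^n$, expand via the binomial theorem, sum the resulting geometric series in $q^k$, and finish with the alternating binomial identity $\sum_{j=0}^n(-1)^j\binom{n}{j}=0$. The only cosmetic difference is that you pull out the constant term via the telescoping rewrite $m-\sum_{i=1}^{m-1}(1-q^i)^n=1+\sum_{i=1}^{m-1}[1-(1-q^i)^n]$ before expanding, whereas the paper expands first and then separates off the $j=0$ term (producing $m-1$); the two orderings lead to the same intermediate expression $1+\sum_{k=1}^n(-1)^{k+1}\binom{n}{k}\,q^k/(1-q^k)$.
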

\begin{proof}
We need only verify the following relation:
\begin{equation}
    \lim_{m \to +\infty} \left(m - \sum\limits^{m-1}_{j=1}(1-q^j)^n\right) = \overline{K}_n.
\end{equation}
The sum can be expanded as follows:
\begin{equation}
\begin{split}
    \sum^{m-1}_{i=1} (1-q^i)^n &= \sum^n_{j=0} (-1)^j \binom{n}{j} \sum^{m-1}_{i=1} q^{ij} \\
    &= m-1 + \sum^n_{j=1} (-1)^j \binom{n}{j} \frac{q^j - q^{jm}}{1-q^j}.
\end{split}
\end{equation}
We thus have 
\begin{equation}
\begin{split}
    &\lim_{m \to +\infty} \left(m - \sum\limits^{m-1}_{j=1}(1-q^j)^n\right) \\
    &= 1 + \sum^n_{j=1} (-1)^{j+1} \binom{n}{j} \frac{q^j}{1-q^j} = \overline{K}_n.
\end{split}
\end{equation}
In the last equality we used the relations
\begin{equation}
    \frac{q^j}{1-q^j} = \frac{1}{1-q^j} - 1
\end{equation}
and
\begin{equation}
    1 - \sum^n_{j=1} (-1)^{j+1} \binom{n}{j} = (1-1)^n = 0.
\end{equation}
This concludes the proof that the expressions \eqref{eq:k1} and \eqref{eq:Knm} are consistent.
\end{proof}

One must be careful by using the expressions \eqref{eq:k1} and \eqref{eq:Knm} for numerical evaluation of the averaged
waiting time. For large $n$ (roughly speaking, for $n>60$) and for small $p$ (roughly, for $p < 0.01$) one can obtain
wrong results due to the limitations of the standard double precision floating point arithmetic. In this case, either
one has to use multiple precision numbers or use the approximations presented in the next section.

\section{Approximations}

No expression for the exact average waiting time time is known in the case of non-deterministic swapping, $a<1$, and for
more than two segments, $n>2$. Moreover, for non-deterministic swapping different orders of the swapping operations lead
to schemes of significantly different efficiencies. One of these schemes is when the number of segments is a power of
two, $n = 2^d$, recursively doubling the entangled segments in the repeater (see Fig.~\ref{fig:qr}c). For this scheme,
an often used approximation \cite{RevModPhys.83.33} is given by
\begin{equation}\label{eq:k2}
    \overline{K}_n \approx \left(\frac{3}{2a}\right)^d \frac{1}{p} 
    = \left(\frac{3}{2a}\right)^{\log_2 n} \frac{1}{p} \equiv \overline{K}'_n.
\end{equation}
As opposed to our exact treatment below, this approximation can be applied only for the power-of-two case and, as we
will show, it is imprecise in relevant regimes of parameters $p$ and $a$. It is asymptotically precise only when both
$p$ and $a$ are small. In particular, for small $p$ and bigger values of $a$ (a very common regime, see e.g.
Ref.~\cite{quantum.2.93}) the approximation is much larger than the actual waiting times and hence the minimal repeater
performance in terms of lower bounds on the repeater rates is significantly underestimated. Moreover, the approximation
does not produce the correct scaling behavior. For the fully deterministic case $p=a=1$ we obviously have
$\overline{K}_n = 1$, because we immediately succeed in the very first attempt, but 
\begin{equation}
    \overline{K}'_n = (3/2)^{\log_2 n} = n^{\log_2 3/2} = n^{0.58\ldots} > \sqrt{n}.
\end{equation}
Thus, in this case the approximation is off by a factor of larger than $\sqrt{n}$. Neither the exact result
\eqref{eq:k1} nor the approximation \eqref{eq:k2} are directly applicable with a finite memory cutoff $m < +\infty$ or
with inclusion of classical communication times at higher nesting levels. The exact result \eqref{eq:Knm} for arbitrary
$m$ and $a=1$ is based on basic probability theory, hard to systematically generalize. How to obtain a systematic and
general framework and incorporate various effects such as arbitrary memory cutoffs and classical communication times
with our formalism is described in detail in later sections of our work. Here we shall now focus on general repeaters
for $m = +\infty$.

To find out the asymptotic behavior of $\overline{{K}_n}$ in the deterministic swapping case, $a=1$, and for $p\ll1$, we
expand $\left(1-p\right)^j$ with the Binomial Theorem and keep only the linear term in $p$ in each of the denominators
in the sum \eqref{eq:k1}. We have
\begin{equation}
    \overline{K}_n \approx \frac{1}{p} \sum_{j=1}^n (-1)^{j+1} \binom{n}{j} \frac{1}{j} = 
    \frac{1}{p}\sum_{j=1}^n \frac{1}{j} = \frac{H_n}{p},
\end{equation}
where $H_n = \sum_{j=1}^n 1/j$ is the $n$'th harmonic number. For large $n$ it can be approximated by 
\begin{equation}
    H_n = \gamma+\ln(n)+\frac{1}{2n} + \mathcal{O}\left(\frac{1}{n^2}\right),
\end{equation}
where $\gamma = 0.57721\ldots$ is the Euler–Mascheroni constant. Combining all these results, we get the following
simple approximation:
\begin{equation}
    \overline{K}_n \approx \frac{1}{p}\left(\gamma+\ln(n)+\frac{1}{2n}\right).
\end{equation}
For a fixed number of segments $n$ both approximations, this and the commonly used $\overline{K}'_n$, give the same
asymptotic growth of the commonly time as $\mathcal{O}(1/p)$, but $\overline{K}'_n$ gives a wrong factor. For a fixed
value of entanglement distribution success probability $p$, the waiting time really grows as $\mathcal{O}(\ln n)$, while
$\overline{K}'_n$ scales as $\mathcal{O}(n^{\log_2(3/2)}) = \mathcal{O}(n^{0.58\ldots})$, which grows faster than
$\mathcal{O}(\sqrt{n})$. This is a clear example where the usually employed approximations when used in the wrong
regime, i.e., $p \ll 1$ and $a=1$ as still relevant for practical quantum repeaters, lead to an inaccurate scaling.

The approximation $\overline{K}'_n$, given by Eq.~\eqref{eq:k2}, is applicable only in the power-of-two case, $n=2^d$,
and for deterministic swapping this approximation becomes rather inaccurate. In the next section, we present an approach
to compute the average waiting time exactly and for an arbitrary number of segments. This approach is based on the
venerable Markov chain theory, which has many applications in different branches of mathematics, physics, biology and
computer science.

\begin{figure*}
    \centering
    \begin{subfigure}[b]{0.49\textwidth}
        \includegraphics{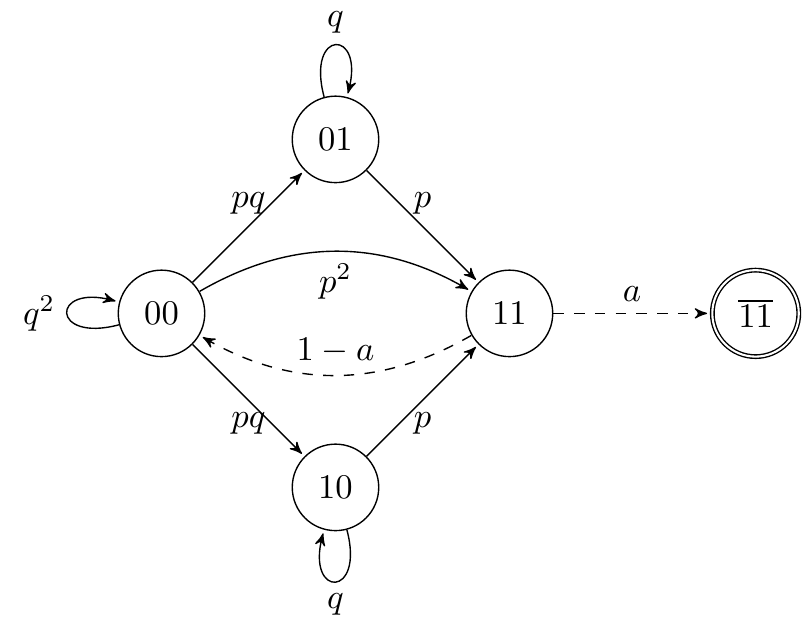}
        \caption{With zero-time transitions.}\label{fig:mc1z}
    \end{subfigure}
    \begin{subfigure}[b]{0.49\textwidth}
        \includegraphics{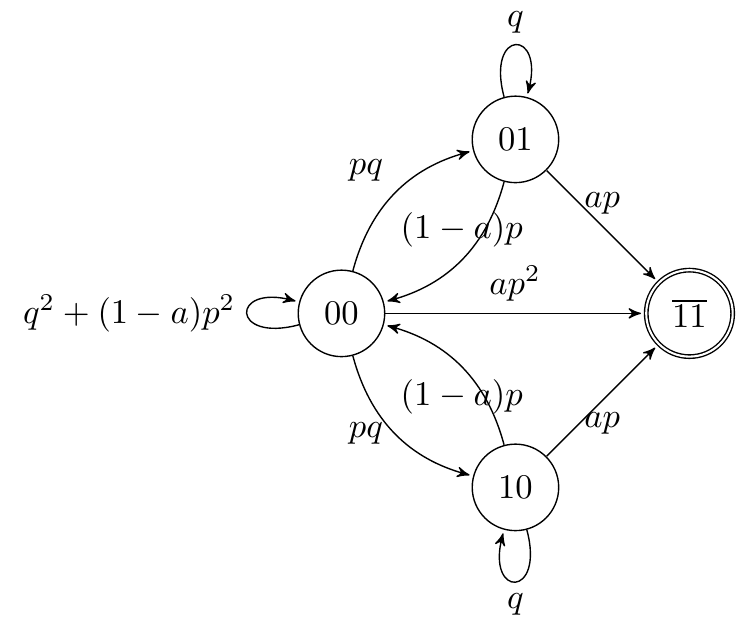}
        \caption{Without zero-time transitions.}\label{fig:mc1}
    \end{subfigure}
   \caption{Markov chain of a two-segment repeater.}\label{fig:ea}
\end{figure*}

\section{Markov chains}

Roughly speaking, a Markov chain is a formal description of a system that can be in several states and can go from one
state to another with known probability. We denote the state space of the system by $\mathcal{S} = \{s_1, \ldots,
s_N\}$. The transition probability from the state $s_i$ to the state $s_j$ is expressed as $p_{ij} = \mathbf{P}(s_i \to
s_j)$. The matrix $P = (p_{ij})^N_{i,j=1}$ of these probabilities is referred to as the transition probability matrix
(TPM) of the system. We can apply this formalism to study properties of the waiting time of quantum repeaters.

\subsection{Simple example}

To illustrate the description of quantum repeaters with the Markov chain approach, consider a two-segment repeater. It
can be only in five states, which we denote as $00$, $01$, $10$, $11$ and $\overline{11}$, where $0$ means that the
segment has no distributed entanglement yet and $1$ means that entanglement was successfully distributed. The overline
means that the corresponding segments have been successfully swapped and represent now a single, longer segment. The
transition probabilities between these states are easy to compute; the states with all possible transitions between them
are shown in Fig.~\ref{fig:mc1z}. For example, the probabilities $\mathbf{P}(11 \to 00) = 1-a$ and $\mathbf{P}(11 \to
\overline{11}) = a$ mean that from the state $11$, where both segments have successfully generated entanglement and are
ready to be swapped, we can either succeed with probability $a$ and move to the state $\overline{11}$, or we fail with
probability $1-a$ and move back to the initial state $00$. The TPM of the two-segment repeater thus reads as
\begin{equation}
    P = 
    \begin{pmatrix}
        q^2 & pq & pq & p^2 & 0 \\
        0 & q & 0 & p & 0 \\
        0 & 0 & q & p & 0 \\
        1-a & 0 & 0 & 0 & a \\
        0 & 0 & 0 & 0 & 1
    \end{pmatrix}.
\end{equation}
The main property of any TPM is that all its row sums are equal to one, and this matrix obviously satisfies this
property. Note that one state is special --- there is no arrow originating from the state $\overline{11}$. Once the
system entered this state, it will stay there. To be completely precise, we would need to show a loop for this state
with probability 1, but instead we distinguish it with double circle. Such states are referred to as absorbing. 

A typical approach to study the waiting time is to ignore the time it takes to perform the swapping, as well as the
classical communication time associated, in general, with any nesting level beyond the initial entanglement distribution
(the inclusion of such more general effects will be discussed in Sec.~VIII). Thus, we only count the initial
entanglement distribution time (including its classical communication). The waiting time, the variable $K_2$ in the
example, is thus the number of steps from the initial state $00$ to the absorbing state $\overline{11}$ without counting
the dashed zero-time transitions. This problem is closely related to the well-known problem for Markov chains:
absorption time. The absorption time is the number of steps it takes to get from some starting state to an absorbing
state (a general Markov chain can have more than one absorbing state). If we could describe the quantum repeater by a
Markov chain without zero-time transitions, then the waiting time would be exactly the absorption time, whose properties
can be obtained directly from the corresponding TPM. The Markov chain without zero-time transitions for a two-segment
quantum repeater is shown in Fig.~\ref{fig:mc1}. We just need to remove the state $11$ and recompute the transition
probabilities. We thus reduced the problem of studying the waiting time of this simplest repeater to the absorption time
problem of the Markov chain shown in Fig.~\ref{fig:mc1} with the following TPM:
\begin{equation}\label{eq:PTM2}
    P_2 = 
    \begin{pmatrix}
        q^2 + (1-a)p^2 & pq & pq & a p^2 \\
        (1-a)p & q & 0 & ap \\
        (1-a)p & 0 & q & ap \\
        0 & 0 & 0 & 1
    \end{pmatrix}.
\end{equation}
This reduction illustrates a general phenomenon: either we use a larger chain with zero-time transitions, whose
transition probabilities are easier to compute, or we use a more compact chain, but should spend more efforts to
determine the transition probabilities. Below we show that this reduction to the absorption time problem is possible for
a general $n$-segment quantum repeater, but before this we shall review the absorption time problem for general Markov
chains.

\subsection{Absorption time}

Consider a Markov chain with a single absorbing state, which we assume to be the last one, $s_N$. The general theory can
be applied to a chain with several absorbing states, but we need a simpler case when there is only one such state. In
this case the TPM can be partitioned as follows:
\begin{equation}\label{eq:PQu}
    P = 
    \begin{pmatrix}
        Q & \vec{u} \\
        \vec{0}^{\mathrm{T}} & 1
    \end{pmatrix},
\end{equation}
where $Q$ is the matrix of transition probabilities between non-absorbing states, $\vec{u}$ is the vector of transition
probabilities from non-absorbing states to the absorbing one and $\vec{0}$ is the zero vector. From the basic property
of TPMs it is easy to see that $\vec{u} = (I-Q)\vec{1}$, where $I$ is the identity matrix of the appropriate dimension,
$N-1$, and $\vec{1} = (1, \ldots, 1)^T$ is the vector of the same dimension with all components equal to $1$. For any
state $s_i$ except the absorbing one $s_N$ we introduce a random variable $K(i)$ whose value is the number of steps it
takes to get to the absorbing state from the state $s_i$ (for the absorbing state we would have $K(N) = 0$, so it is not
useful to introduce this variable). We combine all these variables into the vector 
\begin{equation}
    \vec{K} = (K(1), \ldots, K(N-1))^{\mathrm{T}}.
\end{equation}
We now show that the full PDF of these random variables (and thus their averages, standard deviations and all higher
moments) can be easily expressed in terms of the submatrix $Q$ of the TPM $P$.
\begin{thrm}\label{thrm:1}
    The probability distribution of $\vec{K}$ is given by the following simple expression:
    \begin{equation}\label{eq:PDF}
        \vec{p}_k = Q^{k-1}\vec{u}.
    \end{equation}
    The average waiting time and the second moment read as
    \begin{equation}\label{eq:K12}
    \begin{split}
        \overline{\vec{K}} &= R\vec{1}, \\ 
        \overline{\vec{K}^{\circ 2}} &= (2R - I)\overline{\vec{K}},
    \end{split}
    \end{equation}
    where $\vec{x}^{\circ 2}$ is the component-wise square (also known as Hadamard square) of the vector $\vec{x}$ and
    $R = (I-Q)^{-1}$ is the fundamental matrix of the chain. The variance is $\bm{\sigma}^2 = \overline{\vec{K}^{\circ
    2}} - \overline{\vec{K}}^{\circ 2}$.
\end{thrm}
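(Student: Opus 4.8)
The plan is to prove the three identities in the stated order, first pinning down the full PDF directly from the combinatorics of paths that avoid the absorbing state, and then extracting the first two moments either by summing the resulting matrix power series or, equivalently and with less analytic fuss, by a one-step (first-transition) analysis.

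First, for the PDF: I would note that the event $\{K(i) = k\}$ means precisely that, started in $s_i$, the chain stays inside the non-absorbing block $\{s_1,\dots,s_{N-1}\}$ for the first $k-1$ steps and then jumps to $s_N$ on step $k$. Since $Q$ collects exactly the transition probabilities internal to that block, the probability of sitting in $s_j$ after $k-1$ steps without prior absorption is $(Q^{k-1})_{ij}$, and the concluding jump contributes the factor $u_j$. Summing over the intermediate index $j$ gives $\mathbf{P}(K(i)=k) = \sum_j (Q^{k-1})_{ij}\,u_j$, i.e. $\vec{p}_k = Q^{k-1}\vec{u}$, with the convention $Q^0 = I$ correctly producing $\vec{p}_1 = \vec{u}$. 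As a sanity check one has $\sum_{k\ge 1}\vec{p}_k = \bigl(\sum_{k\ge 0}Q^k\bigr)\vec{u} = R\vec{u} = R(I-Q)\vec{1} = \vec{1}$, so absorption is almost sure componentwise; this is also where one uses that the chain is genuinely absorbing (the state $s_N$ reachable from every other state), equivalently $\rho(Q)<1$, which is what makes $R=(I-Q)^{-1}=\sum_{k\ge 0}Q^k$ well defined.

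Next, for the mean I would evaluate $\overline{\vec{K}} = \sum_{k\ge 1}k\,Q^{k-1}\vec{u}$ via the matrix version of $\sum_{k\ge 1}kx^{k-1}=(1-x)^{-2}$, namely $\sum_{k\ge 1}kQ^{k-1}=(I-Q)^{-2}=R^2$, which converges since $\rho(Q)<1$; this gives $\overline{\vec{K}} = R^2\vec{u} = R^2(I-Q)\vec{1} = R\vec{1}$. Equivalently, a first-step analysis yields $\overline{K(i)} = 1 + \sum_j Q_{ij}\,\overline{K(j)}$, since from $s_i$ one always spends one step and then, with probability $Q_{ij}$, needs $\overline{K(j)}$ more, while $u_i+\sum_j Q_{ij}=1$; hence $(I-Q)\overline{\vec{K}}=\vec{1}$. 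For the second moment I would similarly use $\sum_{k\ge 1}k^2 Q^{k-1} = (I+Q)(I-Q)^{-3} = (I+Q)R^3$ (the analogue of $\sum k^2 x^{k-1}=(1+x)(1-x)^{-3}$), getting $\overline{\vec{K}^{\circ 2}} = (I+Q)R^3\vec{u} = (I+Q)R^2\vec{1} = (I+Q)R\,\overline{\vec{K}}$; or, by the first-step recursion $\overline{K(i)^2} = 1 + 2\sum_j Q_{ij}\overline{K(j)} + \sum_j Q_{ij}\overline{K(j)^2}$ (expanding $\overline{(1+K(j))^2}$), which rearranges to $(I-Q)\overline{\vec{K}^{\circ 2}} = 2\overline{\vec{K}} - \vec{1}$. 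Either route is finished by the purely algebraic identity $2R - I = \bigl(2I-(I-Q)\bigr)(I-Q)^{-1} = (I+Q)R$, so that $\overline{\vec{K}^{\circ 2}} = R\bigl(2\overline{\vec{K}}-\vec{1}\bigr) = 2R\,\overline{\vec{K}} - R\vec{1} = (2R-I)\overline{\vec{K}}$; the variance formula is then immediate from $\bm{\sigma}^2 = \overline{\vec{K}^{\circ 2}} - \overline{\vec{K}}^{\circ 2}$.

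The only delicate point — and it is a matter of rigor rather than a genuine obstacle — is justifying the convergence of the matrix geometric series $\sum_k k^s Q^{k-1}$ and the interchange of summations, which all hinges on $\rho(Q)<1$; this is exactly the hypothesis implicit in writing $R=(I-Q)^{-1}$, and it follows from the chain being absorbing. If one wishes to sidestep series manipulations completely, the first-step-analysis derivations of $(I-Q)\overline{\vec{K}}=\vec{1}$ and $(I-Q)\overline{\vec{K}^{\circ 2}} = 2\overline{\vec{K}}-\vec{1}$ require only that the moments are finite, which again is guaranteed once $\rho(Q)<1$.
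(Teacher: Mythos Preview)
Your proposal is correct and its primary line of argument is essentially identical to the paper's: both derive $\vec{p}_k = Q^{k-1}\vec{u}$ by summing over non-absorbing paths, then obtain the moments by recognizing $\sum_{k\ge 1} kQ^{k-1} = (I-Q)^{-2}$ and $\sum_{k\ge 1} k^2 Q^{k-1} = (I+Q)(I-Q)^{-3}$, and finish with the algebraic identity $(I+Q)(I-Q)^{-1} = 2R - I$. The only substantive addition you make is the alternative first-step recursion ($(I-Q)\overline{\vec{K}} = \vec{1}$ and $(I-Q)\overline{\vec{K}^{\circ 2}} = 2\overline{\vec{K}} - \vec{1}$), which the paper does not present here; this is a nice complement because it avoids any convergence bookkeeping beyond finiteness of the moments, whereas the paper (and your primary route) rely on the series $\sum_k Q^{k-1}$ converging, which the paper simply cites from the Markov-chain literature.
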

\begin{proof}
For any non-absorbing state $s_i$ we have
\begin{equation}
    (\mathbf{p}_k)_i = \sum_{j_1, \ldots, j_{k-1}} Q_{ij_1}Q_{j_1j_2}\ldots Q_{j_{k-2}j_{k-1}}u_{j_{k-1}},
\end{equation}
where the summation is over all non-absorbing states since, if one of the intermediate states is the absorbing one, then
we succeed in less then $k$ steps. In the vector form we have exactly Eq.~\eqref{eq:PDF}. Now that we know the full PDF,
we can compute the average waiting time and its standard deviation. By the definition of the average value and
Eq.~\eqref{eq:PDF}, we have
\begin{equation}
    \overline{\vec{K}} = \sum^{+\infty}_{k=1} k \vec{p}_k = \sum^{+\infty}_{k=1} k Q^{k-1} \vec{u}.
\end{equation}
It has been proven in Ref.~\cite{mc} that the series $\sum^{+\infty}_{k=1}Q^{k-1}$ converges, so we have
\begin{equation}
\begin{split}
    \sum^{+\infty}_{k=1} k Q^{k-1} &= \left(\sum^{+\infty}_{k=1}Q^{k-1}t^k\right)'_{t=1} \\
    &= \left(t(I-Qt)^{-1}\right)'_{t=1} = (I-Q)^{-2},
\end{split}
\end{equation}
where the prime denotes the derivative with regards to $t$. From this it immediately follows that
\begin{equation}
    \overline{\vec{K}} = (I-Q)^{-2} \vec{u} = (I-Q)^{-2} (I-Q)\vec{1} = R\vec{1}.
\end{equation}
The second moment can be computed in a similar way:
\begin{equation}
    \overline{\vec{K}^{\circ 2}} = \sum^{+\infty}_{k=1} k^2 \vec{p}_k = \sum^{+\infty}_{k=1} k^2 Q^{k-1} \vec{u}.
\end{equation}
We have
\begin{equation}
\begin{split}
    \sum^{+\infty}_{k=1} k^2 Q^{k-1} &= \left(t\left(\sum^{+\infty}_{k=1}Q^{k-1}t^k\right)'\right)'_{t=1} \\
    &= (I+Q)(I-Q)^{-3},
\end{split}
\end{equation}
so we get
\begin{equation}
    \overline{\vec{K}^{\circ 2}} = (I+Q)(I-Q)^{-2}\vec{1} = (I+Q)(I-Q)^{-1}\overline{\vec{K}}.
\end{equation}
We just need to show that $(I+Q)(I-Q)^{-1} = 2(I-Q)^{-1}-I$, which is trivially verified by multiplying both sides with
$I-Q$. This concludes the proof.
\end{proof}
This result shows that the fundamental matrix $R=(I-Q)^{-1}$ is an important characteristic of the corresponding Markov
chain. If we can obtain this inverse in a meaningful form analytically, we can immediately compute the desired
characteristics of the waiting time. If it is infeasible to find the inverse analytically, we can go the numerical way.
The vector of average waiting times $\overline{\vec{K}}$ is the solution of the following system of linear equations:
\begin{equation}\label{eq:Ksle}
    (I-Q)\overline{\vec{K}} = \vec{1}.
\end{equation}
It is numerically more robust to solve the system \eqref{eq:Ksle} directly than to compute the matrix $R$ and multiply
it by $\vec{1}$. The second moment satisfies the system of linear equations
\begin{equation}
    (I-Q)\overline{\vec{K}^{\circ 2}} = (I+Q)\overline{\vec{K}}.
\end{equation}
Having found the first moment $\overline{\vec{K}}$, we can solve this system to obtain the second moment and the
standard deviation. Similarly, higher moments can be obtained. We thus can compute all statistical properties of each
random variable $K(i)$.

\subsection{Application to quantum repeaters}

Now we apply the general theory developed in the previous subsection to quantum repeaters and show that the problem of
determining the statistical properties of the waiting time can be reduced to the absorption time problem of an
appropriate Markov chain. In the previous subsection we demonstrated how to compute the absorption time of the system
started in any state, but in the applications to quantum repeaters we usually need only the absorption time of the
system started in the initial state (which we always assume to be the first), so we can just take the first component of
the vectors $\overline{\vec{K}}$ and $\bm{\sigma}$. Unfortunately, we cannot solve a system of linear equations just for
one variable without finding the values of the others, even if we later discard those other values. 

Each segment of an $n$-segment quantum repeater can be in two states, either entanglement has been distributed (which
happens with probability $p$) or it has not (with probability $q = 1-p$). Moreover, some groups of segments can be
successfully swapped, which we will denote by an overline over the swapped segments. The state of a quantum repeater can
be fully described by $n$-digit binary strings with overlines over all possible groups of 1s; individual segments are
shown without overlines. For a two-segment repeater there are five states, which were listed above. For a three-segment
repeater, there are 13 states:
\begin{alignat*}{7}
    &000 \quad &010 \quad &100 \quad &110 \quad &0\overline{11} \quad &\overline{11}0 \quad &\overline{111} \\
    &001 \quad &011 \quad &101 \quad &111 \quad &1\overline{11} \quad &\overline{11}1. \quad &
\end{alignat*}
The number of states grows exponentially with the number of segments. More precisely, the number of states is given by
the following
\begin{lmn}
    The number of states $N_n$ of an $n$-segment repeater is the $(2n+1)$-th Fibonacci number: $N_n = F_{2n+1}$.
\end{lmn}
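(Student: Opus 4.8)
The plan is to set up a bijection (or a recursion) between the valid repeater-state strings and a combinatorial family counted by Fibonacci numbers. First I would make the state description precise: a state of an $n$-segment repeater is a binary string of length $n$ together with a choice, for each maximal block of consecutive $1$s, of a partition of that block into contiguous ``swapped groups'' (each group being either a single un-overlined $1$ or an overlined run of $\ge 2$ ones). Equivalently, reading the string left to right, each position is either a $0$, or a $1$ that starts a new group, or a $1$ that continues the current group; the only constraint is that a ``continue'' symbol must be preceded by a $1$ (a ``start'' or another ``continue''). So a state is a word over a three-letter alphabet $\{Z, S, C\}$ of length $n$ in which every $C$ is immediately preceded by $S$ or $C$.

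Next I would count these words by a transfer-matrix / linear-recursion argument. Let $N_n$ be the number of such words of length $n$. Condition on whether the word ends in $Z$, $S$, or $C$. A word ending in $Z$ or $S$ can be followed by any of $Z,S,C$; a word ending in $C$ can also be followed by any of $Z,S,C$; but the constraint is on the letter \emph{before} a $C$, so it is cleaner to track $a_n$ = number of length-$n$ words ending in $Z$ or $S$ (i.e. ending in a letter that could be a legal predecessor-of-anything and also a legal word-end), and $b_n$ = number ending in $C$. Then appending to a word of length $n$: to get a length-$(n+1)$ word ending in $Z$ or $S$ we may take any word of length $n$, giving $a_{n+1} = 2(a_n + b_n)$; to get one ending in $C$ we must have had the previous letter an $S$ or $C$, i.e. $b_{n+1} = (\text{words of length } n \text{ ending in } S) + b_n$. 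Since words ending in $S$ number exactly $a_n/2$ (by symmetry between $Z$ and $S$ in the ``free'' part — or more carefully, ending-in-$Z$ and ending-in-$S$ counts are equal because $Z$ and $S$ have identical predecessor/successor rules), this closes the recursion. I would then solve it and identify $N_n = a_n + b_n$ with $F_{2n+1}$, e.g. by checking the base cases $N_1 = 2 = F_3$, $N_2 = 5 = F_5$, $N_3 = 13 = F_7$ and verifying that $N_n$ satisfies the Fibonacci-derived recursion $N_{n+1} = 3N_n - N_{n-1}$, which is exactly the recursion obeyed by every other Fibonacci number ($F_{2k+1} = 3F_{2k-1} - F_{2k-3}$).

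Alternatively — and this is probably the slickest route — I would build a direct bijection with a known Fibonacci object. Lattice paths or tilings are natural: $F_{2n+1}$ counts, for instance, the tilings of a $1\times 2n$ strip by squares and dominoes (there are $F_{2n+1}$ of them), or equivalently binary strings of length $2n$ with no two consecutive $1$s. One can map the alphabet $\{Z,S,C\}$ of length $n$ to blocks of length $2$ in such a tiling: $Z \mapsto$ (square, square), $S \mapsto$ (square, domino-start) etc., arranging the correspondence so that the adjacency constraint ``$C$ preceded by $S$ or $C$'' matches the tiling consistency condition. I would spell out such an encoding and check it is a bijection.

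The main obstacle I anticipate is not the counting itself but pinning down the exact set of \emph{legal} states unambiguously, since the paper's notation (overlines over ``all possible groups of 1s'') leaves implicit whether, e.g., an isolated $1$ adjacent to an overlined block is genuinely distinct from being absorbed into it, and whether groups must be contiguous. Once the rule ``a swapped group is a maximal-or-sub run handled as a contiguous block, single $1$s never carry an overline, overlined runs have length $\ge 2$, and distinct groupings of the same $1$-block are distinct states'' is fixed, the transfer-matrix computation is routine and the identification with $F_{2n+1}$ follows from matching the three small cases and the three-term recursion. I would therefore devote the bulk of the written proof to the state-space description and treat the Fibonacci bookkeeping as a short verification.
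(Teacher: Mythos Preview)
Your approach is correct and essentially the same as the paper's: both establish the recursion $N_{n+1} = 3N_n - N_{n-1}$ by a case split on the final symbol (your ``ends in $C$'' is exactly the paper's ``ends in an overline'', and your $a_{n+1}=2N_n$ is the paper's $|\mathcal{S}_1|=2N_n$), then match with the odd-index Fibonacci recursion after checking small cases. One small slip to clean up in the write-up: a word ending in $Z$ \emph{cannot} be followed by $C$, so the sentence ``A word ending in $Z$ or $S$ can be followed by any of $Z,S,C$'' is false as stated---though you immediately pivot to the correct tracking via $a_n,b_n$, so the actual recursion you derive is fine.
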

\begin{proof}
The statement of the Lemma is correct for $n=2$ and $n=3$: in the former case there are $F_5 = 5$ states, $00$, $01$,
$10$, $11$ and $\overline{11}$, and in the latter case there are $F_7 = 13$ states as listed above. The odd-index
Fibonacci numbers $F_{2n+1}$ satisfy the following recurrence relation:
\begin{equation}
    F_{2n+5} = 3F_{2n+3} - F_{2n+1},
\end{equation}
which is easy to obtain from the defining relation $F_{n+2} = F_{n+1} + F_n$. We show that the numbers $N_n$ satisfy the
relation
\begin{equation}
    N_{n+1} = 3N_n - N_{n-1},
\end{equation}
which will prove that $N_n = F_{2n+1}$.

The set of all $(n+1)$-digit binary strings with overlines can be partitioned into two subsets $\mathcal{S}_1$ and
$\mathcal{S}_2$: the first subset contains those strings that do not end in an overline, and the second subset contains
those strings that do. Every string from $\mathcal{S}_1$ can be obtained by suffixing all possible states of an
$n$-segment repeater with zero and one, thus $|\mathcal{S}_1| = 2N_n$. It is easy to see that the strings of the set
$\mathcal{S}_2$ are in one-to-one correspondence with the states of an $n$-segment repeater that end with one: the
ending overline in an $n$-digit string $* \ldots \overline{*1}$ can be extended to $* \ldots \overline{*11}$ or a new
overline can be introduced if the last 1 is not overlined, $* \ldots *1 \to * \ldots *\overline{11}$. It follows that
$|\mathcal{S}_2|$ is the number of $n$-digit strings with overlines that end with one. This number can be obtained by
subtracting from the total number of strings $N_n$ the number of strings that end with zero, which are in one-to-one
correspondence with $(n-1)$-digit strings with overlines. We thus have the relation $|\mathcal{S}_2| = N_n - N_{n-1}$,
which gives $N_{n+1} = |\mathcal{S}_1| + |\mathcal{S}_2| = 3N_n - N_{n-1}$, so the proof is complete.
\end{proof}

From this Lemma it follows that the number of states grows exponentially as $N_n = F_{2n+1} = \mathcal{O}(\lambda^n)$,
with $\lambda = \varphi^2 = 2.61\ldots$, where $\varphi = (1+\sqrt{5})/2$ is the golden ratio constant. However, not all
of them are really needed to describe the process of entanglement distribution. The actual number of states necessary to
describe this process depends on the scheme used to perform swappings. We always start in the initial state $0 \ldots
0$, and not every state is reachable for a given swapping scheme. For example, for the doubling scheme, which is
exclusively studied in the literature, in the case of $n=4$ the state $0\overline{11}0$ is unreachable, since such a
swapping is forbidden in this scheme. On the other hand, for a scheme that we refer to as \textit{dynamical}, when we
swap everything that is ready, $0\overline{11}0$ is reachable, so both states $0110$ and $0\overline{11}0$ must be
included. Any scheme with a predetermined rule for the swappings, such as doubling, we refer to as \textit{fixed}.
Binary trees corresponding to fixed schemes for small quantum repeaters are shown in Fig.~\ref{fig:fix}.

\begin{figure}
    \includegraphics{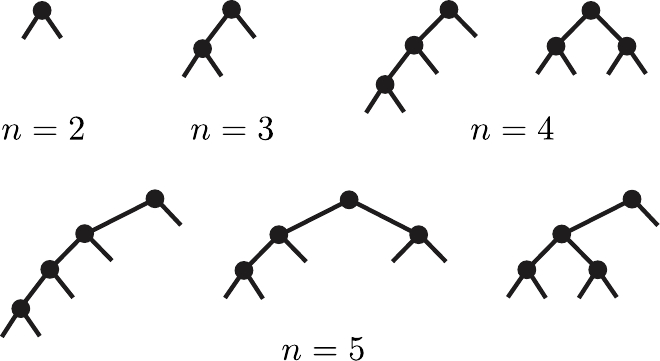}
    \caption{Fixed schemes for small quantum repeaters.}\label{fig:fix}
\end{figure}

Moreover, if we want to reduce the problem of waiting time computing to the absorption time problem, we also need to
remove all zero-time transitions and recompute the transition probabilities between the remaining states. This will
reduce the size of the corresponding Markov chain, but will also make the computation of transition probabilities
slightly more complicated by comparison with the Markov chain with zero-time transitions. Below we show that for the
doubling scheme, which is applicable only if the number of segments is a power of two, $n = 2^d$, the number of states
needed to implement the scheme without zero-time transitions is $N^{\mathrm{rec}}_n = 2^n = 2^{2^d}$. In this
``recursive'' scheme we divide the segments into two equal parts, wait when both are ready and then try to perform the
last swapping to distribute the entanglement over the whole repeater. We apply the same procedure to each half
recursively, which is always possible since the number of segments is a power of two. This is illustrated by
Fig.~\ref{fig:qr}c. Different schemes may require different subsets of the full set of states, but because the doubling
scheme is the only one that has been analyzed in the literature, we mainly use this scheme to illustrate our method.
Nonetheless, the method is completely general and can be applied to any swapping scheme. One of such more general
(non-doubling) schemes is illustrated by Fig.~\ref{fig:qr}d.

We illustrate the process of removing zero-state transitions and the recomputation of transition probabilities by the
case of $n = 3$ and a fixed scheme when we try to swap segments 1 and 2 first and then, having successfully swapped
them, try to swap the combined segments with segment 3. Since we ignore the time it takes to perform swapping and the
classical communication time needed to restart the process in the case of swapping failure, transitions between these
states take different times. For example, the transition $000 \to 000$ takes one time unit (and happens with probability
$q^3$), while the transition $011 \to 000$ happens instantaneously (with probability $1-a$). The states needed to
implement this scheme without zero-time transitions are
\begin{equation}\label{eq:--|-}
    000 \quad 001 \quad 010 \quad 011 \quad 100 \quad 101 \quad \overline{11}0 \quad \overline{111}.
\end{equation}
For example, the state $0\overline{11}$ is unreachable from the initial state $000$ and should not be included into this
set. The state $110$ has a zero-time transition to the state $\overline{11}0$ and should not be included as well. Note
that the set of states depends on the chosen scheme. If we consider a scheme where we swap segments 2 and 3 first and
then swap them with segment 1, the set of states reads as
\begin{equation}\label{eq:-|--}
    000 \quad 100 \quad 001 \quad 101 \quad 010 \quad 110 \quad 0\overline{11} \quad \overline{111}.
\end{equation}
Below we show how to determine this set of states for any fixed scheme with an arbitrary number of segments $n$, but now
we illustrate how to recompute the probabilities. Consider the transition $000 \to 000$. In the original Markov chain
with zero-time transitions the probability of this transition is $\mathbf{P}(000 \to 000) = q^3$. In the new, smaller
chain, we can stay in the state $000$ in three ways: (i) none of the entanglement distributions in the three segments
succeed; this happens with probability $q^3$, (ii) the segments 1 and 2 succeed, but swapping them fails; this happens
with probability $(1-a)p^2q$, (iii) all three segments succeed, swapping the first two also succeeds, but swapping them
with the last one fails; this happens with probability $a(1-a)p^3$. We thus see that the new transition probability is
the sum of these terms, $\mathbf{P}(000 \to 000) = q^3 + (1-a)p^2q + a(1-a)p^3$. This example illustrates what we meant
by saying that the new Markov chain will be smaller, but the transition probabilities will be more tricky to compute. In
the next subsection we present an algorithm to compute the TPM of the Markov chain corresponding to any fixed scheme.
Our approach is applicable to arbitrary schemes, not only to fixed ones, but the TPMs of dynamical schemes must be
constructed by other means. Ones the TPM is constructed, we can use Eqs.~\eqref{eq:PDF}-\eqref{eq:K12} to compute the
statistical characteristics of the chosen scheme. 

\subsection{Algorithm for TPM construction}

We now describe the algorithm for constructing the TPM $P$ of a quantum repeater constructed from two sub-repeaters. We
show how to express its TPM $P$ in terms of the sub-repeater TPMs $P'$ and $P^{\prime\prime}$. The schemes used for these
smaller repeaters can be arbitrary. They could also be fixed (and thus $P'$ and $P^{\prime\prime}$ could be obtained by
recursive applications of this algorithm), or they could be dynamical, or one fixed and the other dynamical --- any
possible combination will work. When both smaller repeaters successfully distribute entanglement over them, we try to
perform the last swapping and either succeed and distribute entanglement over the whole repeater or fail and have to
start this process from scratch. If the schemes used for smaller repeaters are also fixed all the way down, then the
whole scheme is fixed --- the order we perform swappings is fixed and does not depend on the order in which the segments
successfully distribute entanglement. But, as we have already noted, these schemes do not have to be fixed. The only
restriction of this construction is that the division of the segments is fixed on the highest level, and what happens
below can be arbitrary.

Let us assume that an $n$-segment repeater is divided into two parts with $n'$ and $n^{\prime\prime}$ segments, so that
$n = n' + n^{\prime\prime}$. The schemes of the smaller repeaters are described by the states $s'_i$, $i = 1, \ldots,
N'$, and $s^{\prime\prime}_j$, $j = 1, \ldots, N^{\prime\prime}$, respectively. As usual, we assume that $s'_1 = 0
\ldots 0$ ($n'$ zeros) and $s^{\prime\prime}_1 = 0 \ldots 0$ ($n^{\prime\prime}$ zeros) are the initial states of the
sub-repeaters and $s'_{N'} = \overline{1 \ldots 1}$, $s^{\prime\prime}_{N^{\prime\prime}} = \overline{1 \ldots 1}$ are
the absorbing ones. Because swapping between the segments of different sub-repeaters are forbidden, the scheme for the
whole repeater can be described by the states $s'_i s^{\prime\prime}_j$ (which means the concatenation of strings) with
one exception: the state $s'_{N'}s^{\prime\prime}_{N^{\prime\prime}} = \overline{1 \ldots 1} \, \overline{1 \ldots 1}$
must be replaced by the absorbing state $\overline{1 \ldots 1}$ ($n$ ones) for the whole repeater. We thus have that
this schemes requires $N = N' N^{\prime\prime}$ states. 

To illustrate this construction, consider a 3-segment repeater. We can divide it in only two ways: either like $--|-$ or
like $-|--$. The states of a 1-segment ``repeater'' are $0$, $1$, and the states of a 2-segment repeater (without
zero-time transitions) are $00$, $01$, $10$, $\overline{11}$. In the first case $s'_i$ denote the 1-segment repeater
states and $s^{\prime\prime}_j$ denote the 2-segment repeater states. It is easy to see that for such a division we
obtain exactly the states \eqref{eq:--|-} for the whole repeater. For the second division we get the states
\eqref{eq:-|--}.

We now need to compute transition probabilities between different states $s'_i s^{\prime\prime}_j$. As a general rule,
we have
\begin{equation}\label{eq:P1}
    \mathbf{P}(s'_i s^{\prime\prime}_j \to s'_k s^{\prime\prime}_l) 
    = \mathbf{P}'(s'_i \to s'_k) \mathbf{P}^{\prime\prime}(s^{\prime\prime}_j \to s^{\prime\prime}_l),
\end{equation}
provided that $s'_k s^{\prime\prime}_l$ is neither the initial nor the absorbing state of the whole repeater. For the
absorbing state we have the following simple rule:
\begin{equation}\label{eq:P2}
\begin{split}
    \mathbf{P}(s'_i s^{\prime\prime}_j &\to \overline{1 \ldots 1}) \\
    &= a \mathbf{P}'(s'_i \to \overline{1 \ldots 1}) \mathbf{P}^{\prime\prime}(s^{\prime\prime}_j \to \overline{1 \ldots 1}),
\end{split}
\end{equation}
because the overall success is determined by the success of the transitions inside the sub-repeaters and the success of
the swapping at the outermost level. For the initial state the rule is slightly more complicated and reads as
\begin{equation}\label{eq:P3}
\begin{split}
    \mathbf{P}(s'_i s^{\prime\prime}_j &\to 0 \ldots 0) \\
    &= \mathbf{P}'(s'_i \to 0 \ldots 0) \mathbf{P}^{\prime\prime}(s^{\prime\prime}_j \to 0 \ldots 0) \\
    &+ (1-a)\mathbf{P}'(s'_i \to \overline{1 \ldots 1}) \mathbf{P}^{\prime\prime}(s^{\prime\prime}_j \to \overline{1 \ldots 1}),
\end{split}
\end{equation}
since we can return to the initial state in two ways --- inside each sub-repeater individually or by going to the top
and failing to swap on the outermost level. All these relations, the general one given by Eq.~\eqref{eq:P1} and the two
exceptions, given by Eqs.~\eqref{eq:P2}-\eqref{eq:P3}, can be written in a compact matrix form. In the case of $a=1$ we
would have a very simple relation $P = P' \otimes P^{\prime\prime}$, but in the nondeterministic case this relation must
be modified. Namely, if we follow our standard convention that the initial state is the first in the TPM and the
absorbing state is the last, then $P$ is obtained from $P' \otimes P^{\prime\prime}$ by adding the last column,
multiplied by $1-a$, to the first column, and then multiplying the last column by $a$ (the elements of the last row are
untouched). In NumPy \footnote{\href{https://www.numpy.org/}{\texttt{https://www.numpy.org/}}} notation this lengthy explanation
can be compactly expressed as 
\begin{equation}\label{eq:Prec}
\begin{split}
	&\text{\texttt{$P$ = np.kron($P'$, $P^{\prime\prime}$)}} \\
    &\text{\texttt{$P$[:-1, 0] += (1-$a$)*$P$[:-1, -1]}} \\
    &\text{\texttt{$P$[:-1, -1] *= $a$}}
\end{split}
\end{equation}
This algorithm can be easily translated to other languages and computer algebra systems. We emphasize that in this
algorithm the sub-repeaters may have different numbers of segments, but the division of the repeater into sub-repeaters
is fixed --- we know beforehand when we will try the outermost swapping. The schemes used for the sub-repeaters are
arbitrary, not necessarily fixed. If we use the fixed schemes for the sub-repeaters, then we can recursively construct
the TPM of the whole repeater with this algorithm, which holds, for example, for the standard doubling scheme. For the
doubling scheme we start with the matrix $P_1$ of size $2$ of a single segment and double it $d$ times to get the TPM
$P_n$ of an $n$-segment repeater, where $n = 2^d$. Because doubling squares the size of the TPM, $P_n$ is of size
$2^{2^d} = 2^n$, as we claimed before.

\section{Examples}

We now demonstrate how to apply the algorithm of the previous section to small repeaters, where computing TPMs by hand
is still feasible. Moreover, here we also discuss dynamical schemes and compare their performance with that of fixed
schemes.

\subsection{One-segment repeater}

This is a completely trivial case, but we nevertheless include it to utilize it later as a building block of larger
repeaters. For a one-segment repeater the swapping success probability is not applicable, so in this case we have only
one parameter $p$. This repeater can be described by $F_3 = 2$ states $0$ and $1$. The TPM of such a repeater reads as
\begin{equation}\label{eq:PTM1}
    P_1 = 
    \begin{pmatrix}
        q & p \\
        0 & 1
    \end{pmatrix}.
\end{equation}
The $1 \times 1$ matrix $Q_1$ can be identified with a number $q$, so from the relations \eqref{eq:K12} we get 
\begin{equation}
    \overline{K}_1 = (1-q)^{-1}1 = \frac{1}{p}, \quad \overline{K^2_1} = \frac{1+q}{1-q}\overline{K}_1 = \frac{2-p}{p^2}.
\end{equation}
We thus have $\sigma^2_1 = \overline{K^2_1} - \overline{K}^2_1 = q/p^2$. From this relation we see that
$\sigma_1/\overline{K}_1 = \sqrt{q}$ and for small $p$ this ratio is close to one. It means that for small $p$ the
waiting time has a large deviation and cannot be precisely characterized by its average value alone. Of course, we could
easily obtain all these results directly from the geometric distribution of a single repeater segment, but our method is
applicable to larger repeaters too, where it gives the desired result much more easily than by computing the individual
probabilities.

\subsection{Deterministic swapping}

Here we consider an $n$-segment repeater with deterministic swapping. Because swappings cannot fail, their order does
not matter and all schemes are equivalent in this case. We need $2^n$ states to describe this scheme. We can identify
these states with the subsets of the set $[n] = \{1, \ldots, n\}$ of the segments (written as binary strings, we should
overline the consecutive runs of 1). For example, in the case of $n=3$ these states are 
\begin{equation}\label{eq:3det}
    000 \quad 001 \quad 010 \quad 0\overline{11} \quad 100 \quad 101 \quad \overline{11}0 \quad \overline{111}.
\end{equation}
The transition probability between the states associated with the subsets $I, J \subseteq [n]$ is given by
\begin{equation}\label{eq:pIJ}
    \mathbf{P}(I \to J) = 
    \begin{cases}
        0 & I \not \subseteq J \\
        p^{|J|-|I|} q^{n-|J|} & I \subseteq J
    \end{cases}.
\end{equation}
Nonzero transition probabilities are only from a set to any of its supersets since the set of ready segments can only
increase with each transition, but never decrease. We first need to check that such an assignment of probabilities is
correct, i.e., that the equality 
\begin{equation}
    \sum_{J \subseteq [n]} \mathbf{P}(I \to J) = 1
\end{equation}
is satisfied for all $I \subseteq [n]$. We have
\begin{equation}
    \begin{split}
      \sum_{J \subseteq [n]} \mathbf{P}(I \to J) &= \sum_{J \supseteq I} \mathbf{P}(I \to J) \\
      &= \sum^{n}_{j = |I|} \binom{n -|I|}{j - |I|} p^{j-|I|} q^{n - j} \\
      &= \sum^{n-|I|}_{j = 0} \binom{n -|I|}{j} p^{j} q^{n -|I| - j} = 1,
\end{split}
\end{equation}
since the last sum is just $(p + q)^{n-|I|} = 1$ by definition of $q = 1-p$. We thus have a correctly constructed Markov
chain without zero-transitions describing the entanglement distribution process in a quantum repeater with deterministic
swapping. To find the absorption time we need to solve the system of $2^n-1$ equations \eqref{eq:Ksle} for the vector
$\overline{\vec{K}} = \{\overline{K}(I)\}$, where the components are labeled by the strict subsets $I \subset [n]$. We
now show that the solution reads as
\begin{equation}\label{eq:KId}
    \overline{K}(I) = \sum^{n-|I|}_{j=1} (-1)^{j+1} \binom{n-|I|}{j} \frac{1}{1-q^j}.
\end{equation}
This expression for the empty set $I = \varnothing$ gives exactly the expression \eqref{eq:k1}. We need to check that
\begin{equation}
    \sum_{J \subseteq [n]} (\delta_{IJ} - \mathbf{P}(I \to J)) \overline{K}(J) = 1,
\end{equation}
which can be rewritten as
\begin{equation}\label{eq:PIJ}
    1 + \sum_{J \supseteq I} \mathbf{P}(I \to J) \overline{K}(J) = \overline{K}(I).
\end{equation}
The full proof is simple, but tedious so we just give two hints. The first thing to note is that the sum over all
supersets of $I$ can be replaced by the summation over a simple index $i$ from $0$ to $n - |I|$, introducing an
additional factor $\binom{n-|I|}{i}$ (there are so many supersets $J \supseteq I$ with $|J| = |I| + i$). The other hint
is that the double sum that appears after this transformation can be simplified with the following rule:
\begin{equation}
    \sum^{n-|I|}_{i=0} \,\, \sum^{n-|I|-i}_{j=1} = \sum^{n-|I|}_{j=1} \,\, \sum^{n-|I|-j}_{i=0}.
\end{equation}
The rest of the proof is just juggling with Binomial coefficients and applying the Binomial Expansion Theorem. We have
just reproduced the well-known result for the waiting time of a general repeater with deterministic swapping
\cite{PhysRevA.83.012323}. 

\subsection{Asymmetric case}

In this subsection, we generalize the result of the previous section and consider deterministic swapping with asymmetric
distribution probabilities. We no longer assume that all segments have the same entanglement distribution probability
$p$: the $i$-th segment has its own probability $p_i$. We use the notation
\begin{equation}
    p_I = \prod_{i \in I} p_i, \quad q_I = \prod_{i \in I} q_i
\end{equation}
for the product of probabilities over a set $I \subseteq [n]$. Note that in the case of all probabilities being equal,
$p_i = p$, we simply have $p_I = p^{|I|}$. We now derive an explicit expression for the average waiting time of such an
asymmetric repeater with deterministic swapping.

The transition probabilities read as
\begin{equation}
    \mathbf{P}(I \to J) = 
    \begin{cases}
        0 & I \not \subseteq J \\
        p_{J\setminus I} q_{\overline{J}} & I \subseteq J
    \end{cases},
\end{equation}
where $\overline{J} = [n]\setminus J$ is the complement of $J \subseteq [n]$. One can check that 
\begin{equation}
\begin{split}
    &\sum_{J \subseteq [n]} \mathbf{P}(I \to J) = \sum_{J \supseteq I} \mathbf{P}(I \to J) \\
    &= \sum_{J \supseteq I} p_{J \setminus I} q_{\overline{J}} = \prod_{i \in \overline{I}} (p_i + q_i) = 1,
\end{split}
\end{equation}
for all $I \subseteq [n]$, so these probabilities satisfy the Markov chain property. We now show that the explicit
solution of the system \eqref{eq:Ksle} in this case reads as
\begin{equation}\label{eq:KI}
    \overline{K}(I) = \sum_{\varnothing \subset J \subseteq \overline{I}} \frac{(-1)^{|J|+1}}{1-q_J}.
\end{equation}
We need to check that these functions satisfy the equalities \eqref{eq:PIJ}. Substituting Eq.~\eqref{eq:KI} into
Eq.~\eqref{eq:PIJ}, we will get a double sum, which can be transformed as follows:
\begin{equation}
    \sum_{J \supseteq I} \,\,\, \sum_{\varnothing \subset J' \subseteq \overline{J}} = 
    \sum_{\varnothing \subset J' \subseteq \overline{I}} \,\,\, \sum_{I \subseteq J \subseteq \overline{J'}}.
\end{equation}
We have
\begin{equation}
    \sum_{I \subseteq J \subseteq \overline{J'}} p_{J \setminus I} q_{\overline{J}} = 
    q_{J'} \prod_{i \in \overline{J'} \setminus I}(p_i + q_i) = q_{J'},
\end{equation}
so we get
\begin{equation}
\begin{split}
    1 &+ \sum_{J \supseteq I} \mathbf{P}(I \to J) \overline{K}(J) = 1 + \sum_{\varnothing \subset J' \subseteq \overline{I}}
    \frac{(-1)^{|J'|+1}}{1-q_{J'}} q_{J'} \\
    &= \sum_{J' \subseteq \overline{I}} (-1)^{|J'|} + \sum_{\varnothing \subset J' \subseteq \overline{I}}
    \frac{(-1)^{|J'|+1}}{1-q_{J'}} = \overline{K}(I).
\end{split}
\end{equation}
Note that if all $p_i = p$ are equal, the expression \eqref{eq:KI} reduces to Eq.~\eqref{eq:KId}, since for each $j = 1,
\ldots, n-|I|$ there are $\binom{n-|I|}{j}$ sets $J$ such that $\varnothing \subset J \subseteq \overline{I}$ and all
these sets correspond to the same $q_{J} = q^{|J|} = q^j$. In the following subsections we start analyzing the more
general and subtle case of non-deterministic swapping, $a<1$, for small repeaters.

\subsection{Two-segment repeater}

In the case of a two-segment repeater both parameters $p$ and $a$ are meaningful. This case has been considered before,
so here we just briefly discuss it. The TPM (without zero-time transitions) is given by Eq.~\eqref{eq:PTM2}. Note that
this matrix can be obtained by applying the algorithm \eqref{eq:Prec} to the matrix \eqref{eq:PTM1}. The fundamental
matrix $R_2 = (I - Q_2)^{-1}$ can be easily obtained and it becomes
\begin{equation}
    R_2 = 
    \frac{1}{ap(2-p)}
    \begin{pmatrix}
        1 & q & q \\
        1-a & a+q & (1-a)q \\
        1-a & (1-a)q & a + q
    \end{pmatrix}.
\end{equation}
For the vector $\overline{\vec{K}} = R_2 \vec{1}$ of average values, we have
\begin{equation}
    \overline{\vec{K}} = \frac{1}{ap(2-p)}
    \begin{pmatrix}
        1+2q \\
        1+2q-aq \\
        1+2q-aq
    \end{pmatrix}.
\end{equation}
It is easy to see that the first element of this vector, $\overline{K}_2$, coincides with the expression \eqref{eq:Kfm}
for $m = +\infty$, so we have just reproduced the well-known result for a two-segment repeater with non-deterministic
swapping. The variance now reads as
\begin{equation}\label{eq:sigma2}
    \sigma^2_2 = \overline{K}^2_2 - \frac{2p^3-3p^2-2p+4}{ap^2(2-p)^2}.
\end{equation}
For the ratio of the standard deviation to the average value in the typical case of small $p$, we have
\begin{equation}
    \lim_{p \to 0} \frac{\sigma_2}{\overline{K}_2} = \sqrt{1 - \frac{4}{9}a}.
\end{equation}
For small $a$ we again have $\sigma_2/\overline{K}_2 \approx 1$. In the more practical situation of large $a$ the
standard deviation is smaller (relative to the average value), but it never becomes smaller than
$(\sqrt{5}/3)\overline{K}_2 \approx 0.75\overline{K}_2$. So, even in this case the random variable $K_2$ cannot be
accurately characterized only by its average value.

\subsection{Three-segment repeater}

This is the first example of a not-power-of-two case. Here we have two fixed schemes (which are statistically equivalent
and represented by the $n=3$-scheme in Fig.~\ref{fig:fix}) and one dynamical scheme for performing swappings. The two
fixed schemes, segments 1 and 2 first, then segment 3 and segments 2 and 3 first, then segment 1 are equivalent and have
identical statistical properties. So, effectively, we have only one fixed scheme in this case. The TPM of this scheme
can be obtained from the matrices $P' = P_1$ and $P^{\prime\prime} = P_2$ according to the algorithm \eqref{eq:Prec},
corresponding to the second scheme (segments 2 and 3 first, then segment 1). The TPM $P_3$ and the corresponding
$\overline{K}_3$ become
\begin{widetext}
\begin{equation}\label{eq:PTM3}
\begin{split}
    P_3 &= 
    \begin{pmatrix}
        q^3 + (1-a)p^2 q + a(1-a)p^3 & pq^2 & pq^2 & ap^2q & pq^2 + (1-a)p^3 & p^2q & p^2q & a^2p^3 \\
        (1-a)pq + a(1-a)p^2 & q^2 & 0 & apq & (1-a)p^2 & pq & 0 & a^2p^2 \\
        (1-a)pq + a(1-a)p^2 & 0 & q^2 & apq & (1-a)p^2 & 0 & pq & a^2p^2 \\
        (1-a)p & 0 & 0 & q & 0 & 0 & 0 & ap \\
        a(1-a)p^2 & 0 & 0 & 0 & q^2 + (1-a)p^2 & pq & pq & a^2p^2 \\
        a(1-a)p & 0 & 0 & 0 & (1-a)p & q & 0 & a^2p \\
        a(1-a)p & 0 & 0 & 0 & (1-a)p & 0 & q & a^2p \\
        0 & 0 & 0 & 0 & 0 & 0 & 0 & 1
    \end{pmatrix}, \\
    \overline{K}_3 &= \frac{a^2(p^4 - 5p^3 + 10p^2 - 10p + 4) + a(2p^4 - 9p^3 + 17p^2 - 16p + 6) -4p^3 + 16p^2-23p+12}
                      {a^2p(2-p)(a(-p^3+3p^2-4p+2)+2p^2-5p+4)}.
\end{split}
\end{equation}
\end{widetext}
This time the expression for the fundamental matrix is not so compact as in the 2-segment case, but this is not a
problem for computer algebra systems. One can compute the standard deviation in the same way as we did it before,
but here we present only the limit of the ratio for small $p$:
\begin{equation}
    \lim_{p \to 0}\frac{\sigma_3}{\overline{K}_3} = \frac{\sqrt{48a^4 - 208a^3 + 68a^2+144a+144}}{4a^2+6a+12}.
\end{equation}
For small $a$ we again have $\sigma_3/\overline{K}_3 \approx 1$, while for large $a$ this ratio $\approx 7/11 \approx 0.64$.

We do not have to always follow a fixed way of performing swappings and may also try to swap any ready segments. In this
case we need a different subset of the full set of states to describe this scheme:
\begin{equation}\label{eq:3dyn}
    000 \quad 001 \quad 010 \quad 0\overline{11} \quad 100 \quad 101 \quad \overline{11}0 \quad \overline{111}.
\end{equation}
Note that these coincide with the states \eqref{eq:3det} for the case of a 3-segment repeater with deterministic
swapping. This is not surprising, since in the deterministic swapping case, all schemes are equivalent and thus can be
described by the same set of states. The TPM $P^{(\mathrm{dyn})}_3$ and the corresponding
$\overline{K}^{(\mathrm{dyn})}_3$ read as
\begin{widetext}
\begin{equation}\label{eq:PTM3dyn}
\begin{split}
    P^{(\mathrm{dyn})}_3 &= 
    \begin{pmatrix}
        q^3 + 2(1-a)p^2 q + a(1-a)p^3 & p q^2 + (1-a)p^3 & p q^2 & a p^2 q & p q^2 & p^2 q & a p^2 q & a^2 p^3 \\
        (1-a)pq + a(1-a)p^2 & q^2 + (1-a)p^2 & 0 & a p q & 0 & pq & 0 & a^2 p^2 \\
        2(1-a)pq + a(1-a)p^2 & (1-a)p^2 & q^2 & a p q & 0 & 0 & a p q & a^2 p^2 \\
        (1-a)p & 0 & 0 & q & 0 & 0 & 0 & a p \\
        (1-a)pq + a(1-a)p^2 & (1-a)p^2 & 0 & 0 & q^2 & pq & apq & a^2 p^2 \\
        a(1-a)p & (1-a)p & 0 & 0 & 0 & q & 0 & a^2 p \\
        (1-a)p & 0 & 0 & 0 & 0 & 0 & q & ap \\
        0 & 0 & 0 & 0 & 0 & 0 & 0 & 1
    \end{pmatrix}, \\
    \overline{K}^{(\mathrm{dyn})}_3 &= \frac{a^2(p^4 - 4p^3 + 6p^2 -5p+2) + a(2p^4 - 10p^3 + 21p^2 -22p+9)-4p^3+16p^2-22p+11}
    {a^2p(2-p)[a(-p^3+2p^2-2p+1)+3p^2-7p+5]}.
\end{split}
\end{equation}
\end{widetext}
In this case we cannot apply the algorithm \eqref{eq:Prec}, since this scheme is dynamical, and the TPM has been
computed directly from the definition of transition probabilities. We highlight the computation of some matrix elements.
For example, the transition $000 \to 000$ is possible in four ways: (i) no entanglement distribution is successful;
probability is $q^3$, (ii) the distribution in segments 1 and 2 was successful, but the swapping failed; probability is
$(1-a)p^2 q$, (iii) the distribution in segments 2 and 3 was successful, but the swapping failed; probability is
$(1-a)p^2 q$, (iv) the distribution in all three segments was successful, one swapping was also successful, but the
second swapping failed; probability is $a(1-a)p^3$. Summing the probabilities of all these exclusive possibilities, we
obtain the corresponding element of the TPM. In the case of a tie, when all segments successfully distribute
entanglement simultaneously, we always try to swap segments 1 and 2 first. Since the two possibilities are equivalent,
it does not matter whether we do this or first try to swap segments 2 and 3. We may even randomly choose between these
two options. The matrix is constructed under the assumption that segments 1 and 2 are always tried first, and that is
why the probabilities of the transitions $001 \to 001$ and $100 \to 100$ (and some others) differ. We can stay in the
state $001$ in two ways: (i) both segments 1 and 2 fail to distribute entanglement; probability is $q^2$, (ii) both
segments 1 and 2 succeed, but swapping them fails; probability is $(1-a)p^2$. The total probability is $q^2 + (1-a)p^2$.
On the other hand, we can stay in the state $100$ in only one way --- when segments 2 and 3 fail to distribute
entanglement, which happens with probability $q^2$. Had we chosen to swap segments 2 and 3 first, the corresponding
probabilities for these two transitions would have to be, well, swapped.

\begin{figure}
    \includegraphics{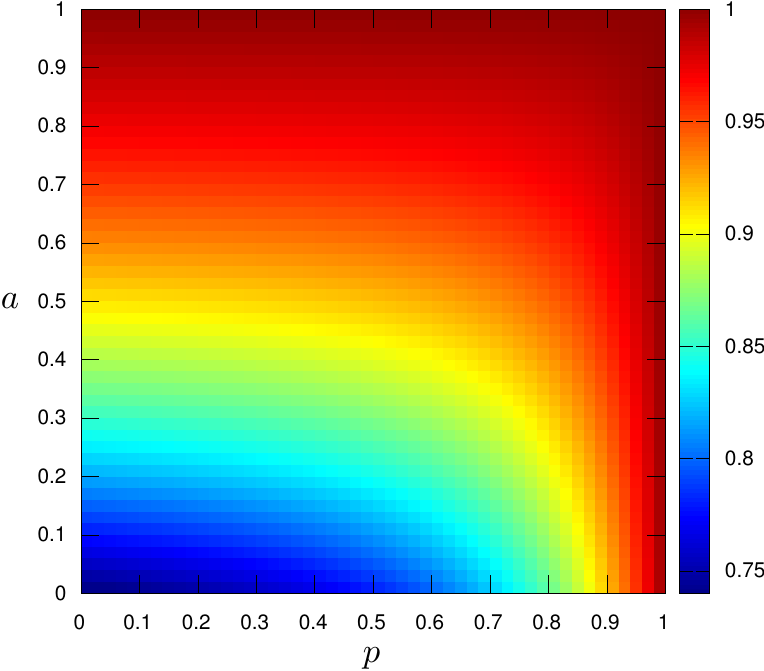}
    \caption{The ratio of the dynamical to fixed waiting times for a 3-segment repeater.}\label{fig:dyn2fix3}
\end{figure}

For the limit of the ratio of the standard deviation to the average value we now have
\begin{equation}
    \lim_{p \to 0}\frac{\sigma^{(\mathrm{dyn})}_3}{\overline{K}^{(\mathrm{dyn})}_3} 
    = \frac{\sqrt{12a^4 - 76a^3 - 59a^2+198a+121}}{2a^2+9a+11}.
\end{equation}
For small $a$ this ratio is close to 1, and for large $a$ it is $\approx 7/11 \approx 0.64$, as in the fixed-scheme
case. The ratio of the average value of the dynamical scheme to that of the fixed scheme is shown in
Fig.~\ref{fig:dyn2fix3}. We see that the dynamical scheme is slightly better, since we do not have to wait for concrete
segments and can start to swap them when they are ready. For $a = 1$ or for $p=1$ the two expressions for the average
value \eqref{eq:PTM3} and \eqref{eq:PTM3dyn} coincide (note that for $a=1$ both expressions coincide with \eqref{eq:k1}
for $n=3$), so the repeater rates are the same in this case. For small $p$ and $a$, we have
\begin{equation}
    \lim_{p, a \to 0} \frac{\overline{K}^{(\mathrm{dyn})}_3}{\overline{K}_3} = \frac{11}{15},
\end{equation}
so for small probabilities the dynamical scheme is approximately 25\% faster.

\subsection{Four-segment repeater}

The TPM $P_4$ of a 4-segment repeater with the recursive scheme can be easily obtained with the algorithm
\eqref{eq:Prec}. This is then effectively a fixed, doubling scheme (see Fig.~\ref{fig:fix} for $n=4$, right). For the
average waiting time of this scheme we have
\begin{widetext}
\begin{equation}
    \overline{K}_4 = \frac{2a^2p^4(p-1)(2p-3)-a(20p^5-72p^4+93p^3-53p^2+10p+4)+3(3-2p)^2(2p^2-3p+2)}
    {a^2p(2-p)(ap^2-(a+2)p+3)(-ap^3+4p^2-6p+4)}.
\end{equation}
\end{widetext}
For the ratio of the variance to the average value we have
\begin{equation}
    \lim_{p \to 0}\frac{\sigma_4}{\overline{K}_4} = \frac{\sqrt{-96a^3 - 272a^2-1728a+2916}}{54-4a}.
\end{equation}
Not surprisingly, for small $a$ this ratio is also close to 1. For large $a$, it is close to $41/125 = 0.328$. For four
segments there are more dynamical schemes than there are for three segments, but the waiting time of each of them can be
computed with our method. Another fixed scheme, which is not doubling, is illustrated by Fig.~\ref{fig:fix} for $n=4$,
left.

\section{Geometric approximation}

The PDF of the waiting time of a single segment is the classical geometric distribution, but for two and more segments
this is no longer exactly true. However, we now show that even in this case the PDF can be well approximated by a
geometric distribution with appropriately chosen parameters. This fact is easy to establish for the deterministic
swapping case. In fact, the probability that a single segment will finish in $k$ or less steps is equal to $1-q^k$
(because 
\begin{equation}
    1 - q^k = p \frac{1-q^k}{1-q} = \sum^{k-1}_{i=0} pq^i,
\end{equation}
or, more simply, the probability that a segment is not ready in $k$ steps is $q^k$), so the probability that $n$
segments will finish in $k$ or less steps is $(1-q^k)^n$. The probability $p_k = \mathbf{P}(K_n = k)$ that the $n$
segments will finish in exactly $k$ steps is then equal to 
\begin{equation}
    p_k = (1-q^k)^n - (1 - q^{k-1})^n,
\end{equation}
for all $k \geqslant 1$. Expanding the sums and making a simple transformation of the terms, we obtain
\begin{equation}
    p_k = q^{k-1}\sum^n_{j=1} (-1)^{j+1} \binom{n}{j} q^{(k-1)(j-1)}(1-q^j).
\end{equation}
In the limit $k \to +\infty$ the sum on the right tends to $n(1-q) = np$, so we have
\begin{equation}\label{eq:geomapprox}
    p_k \sim np q^{k-1},
\end{equation}
so for large $k$ the probability $p_k$ can be approximated by the geometric distribution with ratio $q$.

Non-deterministic swapping requires more sophisticated tools. A matrix $A$ is now called nonnegative (positive) if all
its elements are nonnegative (positive). A nonnegative matrix $A$ is called stochastic if all its row sums are equal to
one. If at least one of these inequalities is strict, then $A$ is called substochastic. A square matrix $A$ is called
primitive if some of its power $A^n$ are positive. The Perron-Frobenius Theorem \cite{pf} states that for any
nonnegative primitive matrix $A$ there is a real eigenvalue $\lambda_1$ with multiplicity one such that all other
eigenvalues $\lambda_j$ (which can be complex) are strictly smaller by absolute value, $|\lambda_j| < \lambda_1$ for $j
\geqslant 2$. If, in addition, $A$ is stochastic, then $\lambda_1 = 1$. If $A$ is substochastic, then $\lambda_1 < 1$.
It is easy to show that in the case of $a < 1$ (non-deterministic swapping) the matrix $Q$ of any scheme is primitive
and substochastic and thus is subject to the Perron-Frobenius Theorem. Substochasticity is obvious, so we need only to
prove that it is also primitive.

We show that all elements of $Q^2$ are strictly positive. In fact, let $\{s_1, \ldots, s_N\}$ be the states (a subset of
the full set of $F_{2n+1}$ states) that implement the given scheme without zero-time transitions. As usually, we assume
that the state $s_1 = 0 \ldots 0$ is initial. Then we have $(Q^2)_{ij} = \mathbf{P}(s_i \to s_1) \mathbf{P}(s_1 \to s_j)
+ \ldots$, where dots stand for other terms, which are nonnegative. Since $i$ is not the absorbing state we have
$\mathbf{P}(s_i \to s_1) > 0$, because there is always a chance that the last swapping fails, be it a fixed scheme or a
dynamical one (it is at this point that we need the assumption $a<1$), and from the initial state we can go to any other
state with nonzero probability, so we also have $\mathbf{P}(s_1 \to s_j) > 0$. It follows that $(Q^2)_{ij} > 0$. 

\begin{figure*}
    \includegraphics{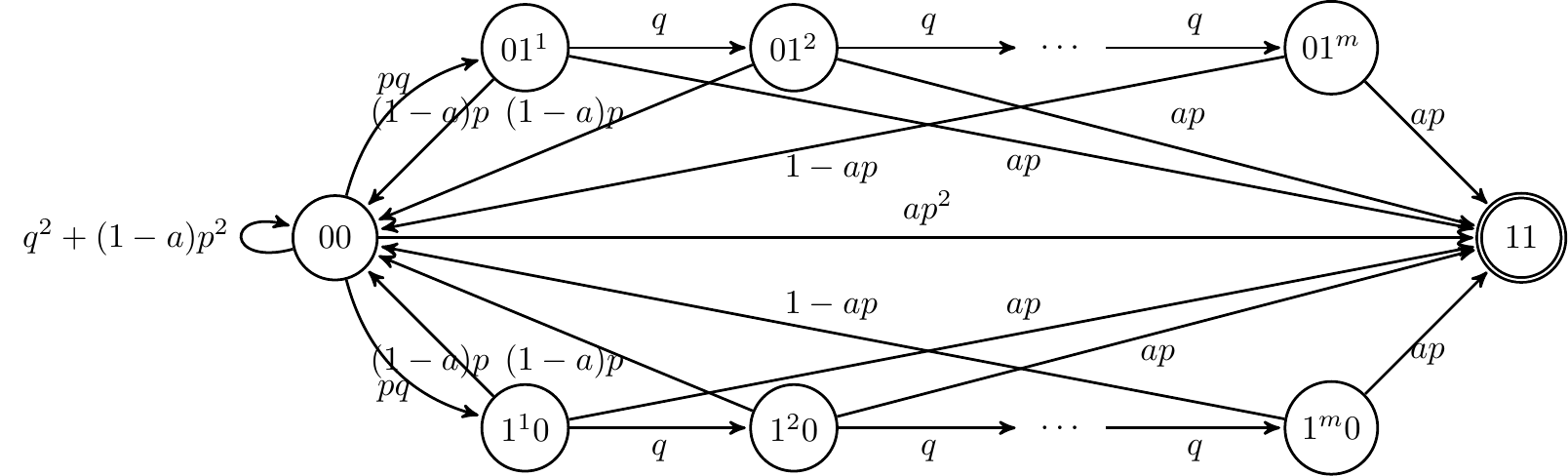}
    \caption{Markov chain of a two-segment repeater with finite memory cutoff and arbitrary swapping probability.}\label{fig:Kfm}
\end{figure*}

Applying the Perron-Frobenius Theorem, we can order the eigenvalues of $Q$ as $\lambda_1 > |\lambda_2| \geqslant \ldots
\geqslant |\lambda_{N-1}|$, and let $\vec{v}_1, \ldots, \vec{v}_{N-1}$ be the corresponding linearly independent
eigenvectors. We can express the vector $\vec{u}$ from the decomposition \eqref{eq:PQu} as a linear combination of these
eigenvectors, 
\begin{equation}
    \vec{u} = c'_1 \vec{v}_1 + \ldots + c'_{N-1} \vec{v}_{N-1}. 
\end{equation}
We just have to solve the linear system $V\vec{c}' = \vec{u}$, where $V$ is the matrix whose columns are the
eigenvectors $\vec{v}_j$, which is nondegenerate due to the linear independence of $\vec{v}_j$. We then have
\begin{equation}
    \vec{p}_k = Q^{k-1} \vec{u} = c'_1 \lambda^{k-1}_1 \vec{v}_1 + \ldots + c'_{N-1} \lambda^{k-1}_{N-1} \vec{v}_{N-1}.
\end{equation}
We need only the first element $p_k$ of this vector, so we finally obtain 
\begin{equation}
    p_k = c_1 \lambda^{k-1}_1 + \ldots + c_{N-1}\lambda^{k-1}_{N-1}, 
\end{equation}
where $c_j = c'_j v^0_j$ and $v^0_j$ is the first element of $\vec{v}_j$. For large $k$, we clearly have 
\begin{equation}
    p_k \approx c_1 \lambda^{k-1}_1,
\end{equation}    
because the largest eigenvalue $\lambda_1$ and its corresponding eigenvector are unique. Compare this expression with
Eq.~\eqref{eq:geomapprox}, where it is possible to find $\lambda_1 = q$ and $c_1 = np$ explicitly. In both cases,
starting from some sufficiently large $k$ the probabilities $p_k$ are well described by a geometric distribution,
similar to those for a single repeater segment. To our knowledge, this has never been shown explicitly. In fact, the
common wisdom seems to be that not only a single segment, but an entire quantum repeater follows exactly a geometric
distribution. We showed that this holds only approximately.

\section{Finite memory}

Let us first show that the Markov chain method can reproduce Eq.~\eqref{eq:Kfm}, which gives the waiting time of a
two-segment repeater with non-deterministic swapping and arbitrary memory cutoff. We introduce auxiliary states $01^i$
and $1^i0$, $i = 1, \ldots, m$. The superscript denotes the time the segment waits in the ready state. When this time
exceeds the limit, $m$ time units, the segment is forcefully reset to the initial state. The corresponding Markov chain
is shown in Fig.~\ref{fig:Kfm}. It is easy to construct the TPM corresponding to this chain and, with the help of a
computer algebra system, to verify that this TPM leads to Eq.~\eqref{eq:Kfm} for the average waiting time.

In the following, we show how to apply the Markov chain approach to repeaters with deterministic swapping, $a=1$, and
finite memory cutoff, $m$. In this case, we need another full set of states describing such a quantum repeater and
Markov chains with several absorbing states. The states are the tuples $(i_1, \ldots, i_n)$ with $i_j = 0, \ldots, m$
such that there is at least one $j = 1, \ldots, n$ with $i_j = 0$ or $i_j = 1$. Totally, there are $N = (m+1)^n -
(m-1)^n$ such states. Each component $i_j$ denotes the time passed since the $j$-th segment has successfully distributed
entanglement. The condition that there must be a component with value $0$ or $1$ means that a valid state is one that is
either not ready (some component is $0$) or has just become ready (no zero components but at least one is equal to $1$).
Tuples with all components larger than $1$ are not needed --- they describe those states where all the segments are in
the ready state for longer than necessary. Tuples with at least one zero component are non-absorbing states, and all the
others are absorbing. There are $N_0 = (m+1)^n - m^n$ non-absorbing states and $N_1 = m^n - (m-1)^n$ absorbing ones. If
we follow our usual convention and put all absorbing states at the end, then the PTM of this Markov chain can be
decomposed as follows:
\begin{equation}
    P = 
    \begin{pmatrix}
        Q & U \\
        0 & I
    \end{pmatrix},
\end{equation}
where $Q$ is an $N_0 \times N_0$ matrix (whose elements are probability transitions between non-absorbing states), $U$
is an $N_0 \times N_1$ matrix, $0$ is the zero $N_1 \times N_0$ matrix and $I = I_{N_1}$ is the $N_1 \times N_1$
identity matrix. As before, $R = (I - Q)^{-1}$ is the fundamental matrix of the chain (where $I = I_{N_0}$ is the $N_0
\times N_0$ identity matrix). According to the general result, Ref.~\cite{mc}, the absorption time as usual reads as
$R\vec{1}$ (so it does not matter how many absorbing states there are). We only need to compute the TPM $P$.

If fact, we only need to compute the matrix $Q$. Consider a non-absorbing state $(i_1, \ldots, i_n)$. Denote by $n_z =
|\{j|i_j = 0\}|$ the number of its zero components. Since the state is non-absorbing, $n_z > 0$. If for some $j$ we have
$i_j = m$, i.e., at least one segment has reached the maximal allowed time, then such a state can go only into two other
states: to the absorbing state $(i'_1, \ldots, i'_n)$ (where $i'_j = i_j$ if $i_j > 0$ and $i'_j = 1$ if $i_j = 0$) with
probability $p^{n_z}$, or to the initial state $(0, \ldots, 0)$ with probability $1-p^{n_z}$. If none of the segments
have reached the maximal time, then
\begin{equation}
\begin{split}
    \mathbf{P}((i_1, \ldots, i_n) &\to (i'_1, \ldots, i'_n)) = \\
    &\mathbf{P}(i_1 \to i'_1) \ldots \mathbf{P}(i_n \to i'_n),
\end{split}
\end{equation}
where $\mathbf{P}(0 \to 0) = 1-p$, $\mathbf{P}(0 \to 1) = p$, $\mathbf{P}(i \to i+1) = 1$, and all other probabilities
are zero. It is straightforward to construct this TPM programmatically in a computer algebra system for given parameters
$p$, $n$ and $m$, where, recall, we have set $a=1$. With the help of such a system, it is easy to verify that the
average waiting time produced by this TPM gives exactly the expression \eqref{eq:Knm}, $\overline{K}^{(m)}_n$, i.e., the
average waiting time for deterministic swapping and finite memory cutoff. Our formalism can also be used to treat the
most general case of arbitrary $a$ and $m$.

\begin{figure}[ht]
    \includegraphics{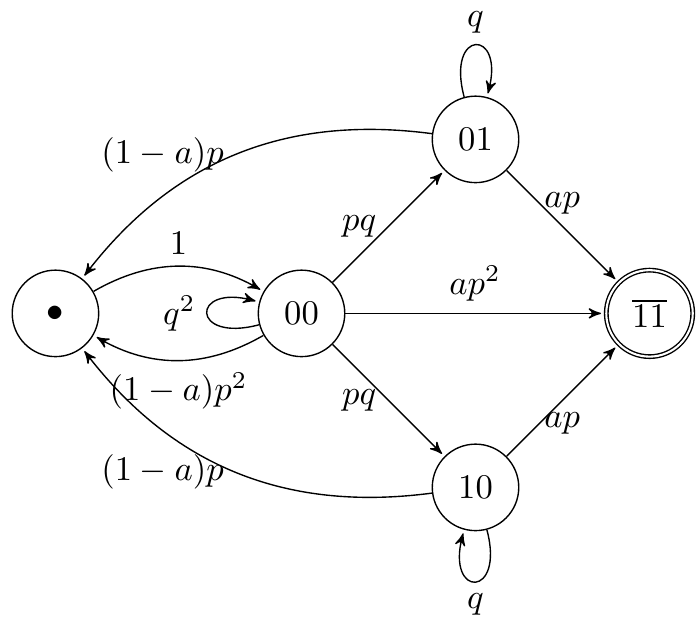}
    \caption{Markov chain of a two-segment repeater with classical communication.}\label{fig:2cc}
\end{figure}

\section{Classical communication}

In this section we add classical communication to the basic scheme considered before. Note that classical communication
times in a general quantum repeater with probabilistic swapping have been considered before, e.g. in
Ref.~\cite{PhysRevA.87.062335}, however, once again, those existing schemes are approximate depending on the usual
assumptions, as discussed in Sec.~II. We assume that the number of segments is a power of two, $n = 2^d$, and on each
level $i = 2, \ldots d$ it typically takes $2^{i-1}$ units of time to reinitiate the segments in the case of swapping
failure (because $2^i/2 = 2^{i-1}$). We also need classical communication to communicate success of a swapping
operation, but for simplicity we only consider the case of failure.  The idea is to introduce additional states that
correspond to these additional $2^{i-1}$ time units. In the case of $d=1$ we need one additional state, which we denote
by dot. The corresponding Markov chain of a two-segment repeater with classical communication is shown in
Fig.~\ref{fig:2cc}. Each transition in this figure takes exactly one unit of time. The corresponding TPM reads as
\begin{equation}\label{eq:P2cc}
    P = 
    \begin{pmatrix}
        q^2 & pq & pq & (1-a)p^2 & ap^2 \\
        0 & q & 0 & (1-a)p & ap \\
        0 & 0 & q & (1-a)p & ap \\
        1 & 0 & 0 & 0 & 0 \\
        0 & 0 & 0 & 0 & 1
    \end{pmatrix}.
\end{equation}
We now show how to construct TPMs for larger repeaters.

We describe an algorithm to ``double'' a given repeater with the TPM $P$ and the classical communication time of $c$
time units. The original repeater can employ any kind of scheme. Applying our algorithm recursively all the way down, we can
compute the TPM of a repeater with $n = 2^d$ segments and classical communication at all levels.

\begin{figure*}
    \centering
    \begin{subfigure}[b]{0.47\textwidth}
        \includegraphics{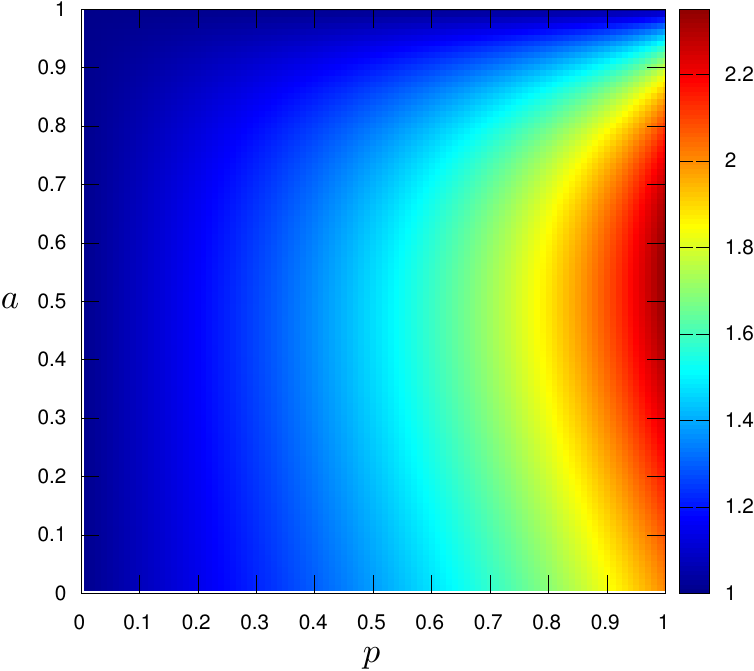}
    \end{subfigure}
    \begin{subfigure}[b]{0.47\textwidth}
        \includegraphics{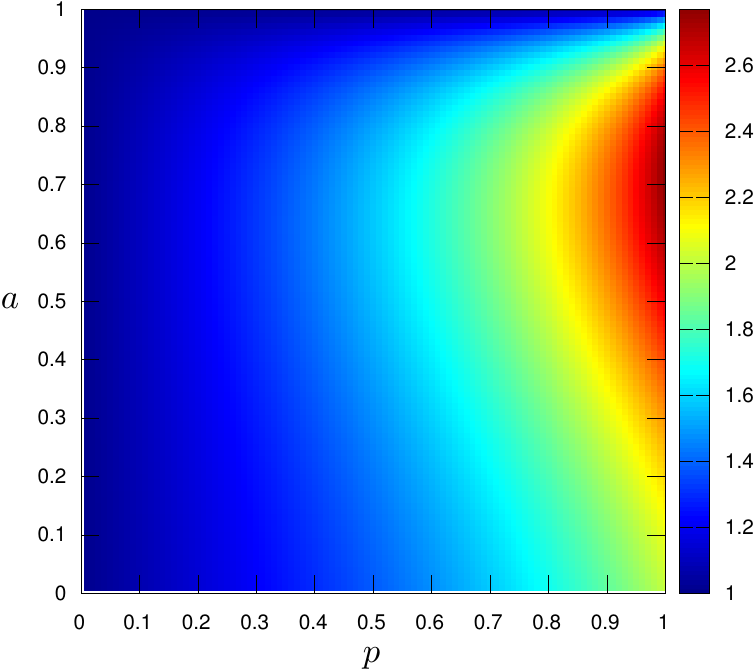}
    \end{subfigure}
   \caption{The ratio of the waiting time with classical communication to the waiting time without 
   for $n=8$ (left) and $n=16$ (right) segments.}\label{fig:cc}
\end{figure*}

Given a quantum repeater with the TPM $P$, we want to construct a quantum repeater of double length, performing the last
swapping exactly in the middle of the new repeater, and assuming that restarting the process in the case of failure
takes $c$ time units. If the original repeater is described by the states $s_1, \ldots, s_N$, the doubled repeater
without classical communication is described by the $N^2$ states $s_i s_j$, $i, j = 1, \ldots, N$. To include the
classical communication into the scheme we need $c$ additional states, which we denote as $\bullet_1$, \ldots,
$\bullet_c$, so the full set of states contains $N^2+c$ elements. We arrange these states in the following order: $s_i
s_j$ for all $i$, $j$ except the state $s_N s_N$, then $\bullet_i$, $i = 1, \ldots, c$, and finally $s_N s_N$. The
``doubled'' TPM $P_d$ can then be partitioned into nine blocks as
\begin{equation}
    P_d = 
    \begin{pmatrix}
        P_{ss} & P_{s\bullet} & P_{sN} \\
        P_{\bullet s} & P_{\bullet\bullet} & P_{\bullet N} \\
        0 & 0 & 1
    \end{pmatrix}.
\end{equation}
We give explicit expressions the first six blocks. Let partition $P \otimes P$ be as follows:
\begin{equation}
    P \otimes P = 
    \begin{pmatrix}
        Q_d & \vec{u}_d \\
        0 & 1
    \end{pmatrix}.
\end{equation}
Then we simply have $P_{ss} = Q_d$ since the transition probability between non-absorbing states $s_{i_1} s_{j_1}$ and
$s_{i_2} s_{j_2}$ is the product of transition probabilities $s_{i_1} \to s_{i_2}$ and $s_{j_1} \to s_{j_2}$. For the
next block we have
\begin{equation}
    P_{s\bullet} = 
    \begin{pmatrix}[c|c]
        & \ldots \\
        (1-a)\vec{u}_d & 0 \\
        & \ldots
    \end{pmatrix},
\end{equation}
because when the last swapping fails, we have to go to the state $\bullet_1$ and then restart the process through the
sequence of transitions $\bullet_1 \to \ldots \to \bullet_c \to s_1 s_1$. Each of these transitions happens with
probability one, so we also have
\begin{equation}
    P_{\bullet s} = 
    \begin{pmatrix}
        0 & 0 & \ldots & 0 & 0 \\
        0 & 0 & \ldots & 0 & 0 \\
        \hdotsfor{5} \\
        0 & 0 & \ldots & 0 & 0 \\
        1 & 0 & \ldots & 0 & 0
    \end{pmatrix}, \ 
    P_{\bullet\bullet} = 
    \begin{pmatrix}
        0 & 1 & \ldots & 0 & 0 \\
        0 & 0 & \ldots & 0 & 0 \\
        \hdotsfor{5} \\
        0 & 0 & \ldots & 0 & 1 \\
        0 & 0 & \ldots & 0 & 0 
    \end{pmatrix}.
\end{equation}
The other blocks are just $P_{sN} = a \vec{u}_d$ and $P_{\bullet N} = \vec{0}$. This gives us the full TPM of the
doubled repeater with classical communication. One can easily check that the matrix \eqref{eq:P2cc} can be obtained from
the matrix \eqref{eq:PTM1} with this algorithm with $c=1$. Fig.~\ref{fig:cc} illustrates the effect of classical
communication as the ratio of the averaged waiting time with classical communication to that without. One can see that
the maximal effect of classical communication grows with the number of segments, and also, as one would expect, for
larger probabilities $p$ and $a$ where the absolute waiting times decrease and so the net effects of classical
communication become more pronounced. For small $p$ and $a$, waiting times are so long anyway that the addition of
classical communication times hardly makes a difference. This is why in the regime of the commonly used approximation,
the classical communication times are typically neglected on higher levels. However, for a practical quantum repeater
with moderate $p$ and significantly larger $a$, for many segments, the classical communication should be included. In
the case $a=1$, it is already implicitly included via the elementary time unit for the initial entanglement
distribution. Note that once memory cutoffs and even explicit memory errors (such as dephasing) are included, classical
communication times become relevant even for small repeaters. 

\section{Generating functions}

An alternative approach to obtain the average waiting time is to use the method of generating functions \footnote{In an
earlier version of our work, we had already included this alternative approach. Later Vinay and Kok
\cite{PhysRevA.99.042313} put a greater emphasis on such methods and also demonstrated the usefulness of (probability)
generating functions for analyzing the rates and the statistics of a quantum repeater.}. In general, this method is
inferior to the Markov chain method, since numerically it is much easier to deal with matrices and vectors than with
functions. On the other hand, generating functions can be useful in more complicated situations, where the Markov chain
approach cannot be easily applied. We shall give an example for this in this section.

If $X$ is a random variable taking only non-negative integer values, then its probability generating function (PGF)
$g(t)$ is defined by 
\begin{equation}
    g(t) = \sum^{+\infty}_{k = 0} \mathbf{P}(X=k)t^k.
\end{equation}
Any PGF $g(t)$ satisfies the equality $g(1) = 1$. The PGF $g(t)$ contains full statistical information about the
corresponding random variable $X$, for example, the average value $\overline{X}$ and the variance $\sigma^2 =
\overline{X^2} - \overline{X}^2$ via $\overline{X} = g'(1)$ and $\sigma^2 = g^{\prime\prime}(1) + g'(1) - g'(1)^2$. We
now show how to reproduce Theorem~\ref{thrm:1}. 

Consider a chain with $N$ states $s_i$, $i = 1, \ldots, N$, and single absorbing state $s_N$. For any state $i = 1,
\ldots, N$ we introduce an integer-valued random variable $K_i$, which is defined to be the number of steps it takes to
get to the absorbing state starting in the state $s_i$ (note that $K_N = 0$ since we are already in the absorbing
state). Let $g_i(t)$ denote the PGF of $K_i$ and $\vec{g}(t) = (g_1(t), \ldots, g_{N-1}(t))^{\mathrm{T}}$ be the vector
of these PGFs. They are connected by the following system of linear equations 
\begin{equation}
    g_i(t) = \sum^N_{j = 1} p_{ij} t_{ij} g_j(t), 
\end{equation}
where $t_{ij} = t$ if the transition $s_i \to s_j$ takes one time unit and $t_{ij} = 1$ if $s_i \to s_j$ is a zero-time
transition. It is clear that $g_N(t) = 1$. In the matrix notation it reads as 
\begin{equation}
    \vec{g}(t) = Q(t) \vec{g}(t) + \vec{u}(t), 
\end{equation}
where $Q(t)$ is the matrix $Q$ when each element is multiplied by $t$ except zero-time transitions elements, which are left
unchanged. The vector $\vec{u}(t)$ is obtained in a similar way from $\vec{u}$, where $Q$ and $\vec{u}$ are defined by
the decomposition \eqref{eq:PQu}. The solution of this system reads as 
\begin{equation}
    \vec{g}(t) = (E - Q(t))^{-1}\vec{u}(t). 
\end{equation}    
This expression can be applied to an arbitrary absorbing chain with a single absorbing state and zero-time transitions.
In the case of a Markov chain without zero-time transitions all elements are multiplied by $t$, so we have $Q(t) = t Q$,
$\vec{u}(t) = t \vec{u}$, and the solution becomes 
\begin{equation}\label{eq:g}
    \vec{g}(t) = t(E - tQ)^{-1}\vec{u}. 
\end{equation}
It is easy to see that 
\begin{equation}
    \vec{g}(1) = (E-Q)^{-1}\vec{u} = (E-Q)^{-1}(E-Q)\vec{1} = \vec{1}, 
\end{equation}    
so this solution satisfies the basic property of a PGF. The derivative $\vec{g}'(t)$ is easy to compute. We have
\begin{equation}
    (E-tQ)\vec{g}(t) = t(E-Q)\vec{1}.
\end{equation}
Differentiating this relation we get
\begin{equation}
    -Q\vec{g}(t) + (E - tQ)\vec{g}'(t) = (E-Q)\vec{1}.
\end{equation}
Substituting $t = 1$ and taking into account that $\vec{g}(1) = \vec{1}$ we obtain the relation 
\begin{equation}
    \overline{\vec{K}} = \vec{g}'(1) = (E - Q)^{-1}\vec{1} = R\vec{1}.
\end{equation}
The second moment can be computed as 
\begin{equation}
    \overline{\vec{K}^{\circ 2}} = \vec{g}^{\prime\prime}(1) + \vec{g}'(1) = (2R-I)\overline{\vec{K}},
\end{equation}
which coincides with Eq.~\eqref{eq:K12}.

We now compute the probabilities $\vec{p}_k$ using the generating functions approach. Let us denote $\tilde{\vec{p}}_{k}
= \mathbf{P}(\vec{K} > k)$ and introduce the corresponding generating function (GF)
\begin{equation}
    \tilde{g}(t) = \sum^{+\infty}_{k=0} \tilde{\vec{p}}_k t^k.
\end{equation}
It is easy to see that the standard PGF can be expressed in terms of the standard generating function as
\begin{equation}
    g(t) = 1 - (1-t) \tilde{g}(t).
\end{equation}
For the random variables under discussion we explicitly get this GF from Eq.~\eqref{eq:g} 
\begin{equation}
    \tilde{\vec{g}}(t) = (E-tQ)^{-1} \vec{1}.
\end{equation}
For the $k$-th derivative we have
\begin{equation}
    \tilde{\vec{g}}^{(k)}(t) = k! (E-tQ)^{-(k+1)} Q^k \vec{1},
\end{equation}
from which we immediately obtain the probabilities
\begin{equation}
    \tilde{\vec{p}}_k = \frac{\tilde{\vec{g}}^{(k)}(0)}{k!} = Q^k \vec{1}.
\end{equation}
The probabilities $\vec{p}_k$ can be computed as
\begin{equation}
    \vec{p}_k = \tilde{\vec{p}}_{k-1} - \tilde{\vec{p}}_k = Q^{k-1}(E-Q)\vec{1} = Q^{k-1}\vec{u},
\end{equation}
which is exactly the expression \eqref{eq:PDF}. 

As a concrete example, let us compute the PGF of a two-segment repeater. According to Eq.~\eqref{eq:PTM2}, the matrix
$Q$ and the vector $\vec{u}$ read as
\begin{equation}
\begin{split}
    Q &= 
    \begin{pmatrix}
        q^2 + (1-a)p^2 & p q & p q \\
        (1-a)p & q & 0 \\
        (1-a)p & 0 & q
    \end{pmatrix}, \\
    \vec{u} &= (a p^2, a p, a p)^{\mathrm{T}}.
\end{split}
\end{equation}
From Eq.~\eqref{eq:g} we obtain
\begin{equation}\label{eq:g2}
    g(t) = \frac{a p^2 t (1+qt)}{1-(2-3p+(2-a)p^2)t + q(1-2p+ap^2)t^2},
\end{equation}
where $g(t)$ is the first element of the vector $\vec{g}(t)$. It is easy to check that $g'(1) = \overline{K}_2$ and 
\begin{equation}
    g^{\prime\prime}(1) + g'(1) - g'(1)^2 = \sigma^2_2,
\end{equation}
where the variance $\sigma^2_2$ is given by Eq.~\eqref{eq:sigma2}.

\begin{figure}
    \includegraphics{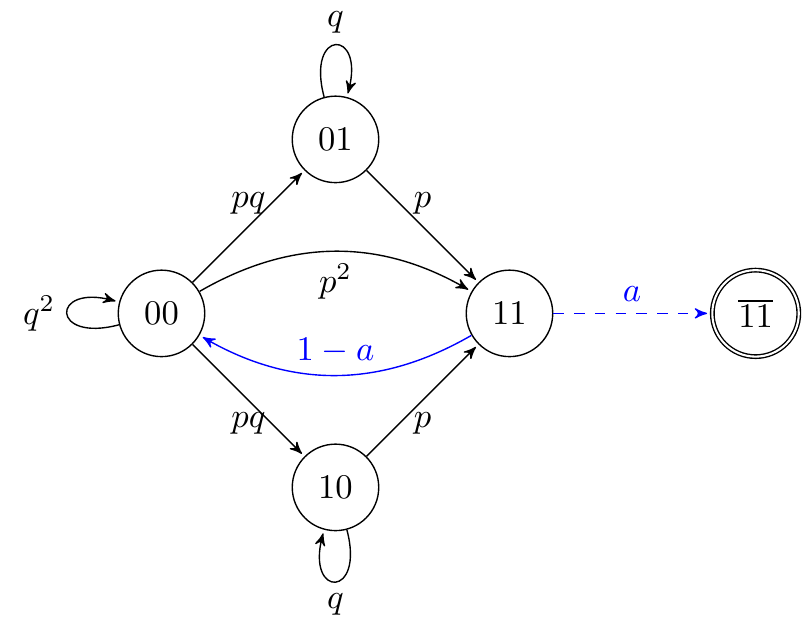}
    \caption{The Markov chain for a two-segment repeater with different time scales.}\label{fig:MCdt}
\end{figure}

Up to now we have always assumed that all time intervals can be expressed in terms of one common time unit. Let us
consider a more general problem, where restarting the process takes the time $\tau'$, which is independent of $\tau$
(while, for simplicity, here we again ignore the classical communication times when the swapping succeeds, though this
can be also easily included). Fig.~\ref{fig:MCdt} shows the Markov chain of such a repeater. All transitions marked by
black arrows take the time $\tau$. The transition marked by a blue arrow takes the time $\tau'$. The time for the
transition marked by the blue dashed arrow (swapping success) is ignored. The absorption time now reads as 
\begin{equation}
    T = K \tau + L \tau', 
\end{equation}
where the random variable $K$ is the number of black transitions and $L$ is the number of blue transitions. 

We mark different transitions by different variables ($t$, as before, marks attempts to distribute entanglement and $v$
marks unsuccessful attempts to perform swapping corresponding to the blue line in the figure). The matrix $Q$ with this
information included reads as
\begin{equation}
  Q(t, v) = 
  \begin{pmatrix}
    t(1-p)^2 & tp(1-p) & tp(1-p) & tp^2 \\
    0 & t(1-p) & 0 & tp \\
    0 & 0 & t(1-p) & tp \\
    v(1-a) & 0 & 0 & 0 
  \end{pmatrix}.
\end{equation}
The PGFs of the absorption times are given by 
\begin{equation}
\begin{split}
    \vec{g}(t, v) &= (E - Q(t, v))^{-1} \vec{u}(t, v), \\
    \vec{u}(t, v) &= (0, 0, 0, a)^{\mathrm{T}}.
\end{split}
\end{equation}
For the first component, $g = g(t, v)$, we have
\begin{widetext}
\begin{equation}
    g(t, v) = \frac{a p^2 t (1 + qt)}{1 - (2-3p+p^2 (1+(1-a)v))t + q (1-2p+p^2(1-(1-a)v))t^2}.
\end{equation}
\end{widetext}
If we set $v=1$ in this expression, i.e. if we ignore the additional information, we get the less detailed PGF given by
Eq.~\eqref{eq:g2}. Taking into account that 
\begin{equation}
    \overline{K} = g'_t(1, 1), \quad \overline{L} = g'_v(1, 1), 
\end{equation}    
we obtain 
\begin{equation}
    \overline{T} = \overline{K}_2 \tau + \frac{1-a}{a} \tau'.
\end{equation}
The variance $\sigma^2$ can be computed in the usual way, and the only new part is computing the correlation
\begin{equation}
    \overline{K L} = g^{\prime\prime}_{tv}(1, 1) = \frac{2(1-a)(3-2p)}{a^2 (2-p) p}. 
\end{equation}    
It is easy to see that for small $p$ and $a$ the main contribution to $\overline{T}$ is given by the first term,
$\overline{K}_2\tau$. 

\section{Lumpability}

The full set of states needed to describe an $n$-segment quantum repeater with infinite memory is $F_{2n+1} =
\mathcal{O}(2.61\ldots^n)$. The doubling scheme requires less states, namely $2^n$. Here we show that if the scheme has
some symmetry, this symmetry can be exploited to greatly reduce the number of states \footnote{In an earlier version of
our work, we had already included the lumbability method as a means to render the Markov-chain-based rate analysis more
efficient. Vinay and Kok \cite{PhysRevA.99.042313} employ and describe similar methods.}. We emphasize that the method
presented here is just an implementation trick and in no way influences the result produced by the original algorithm.
It just makes this algorithm much more practical than it otherwise would be.

The basic idea of the trick is the observation that in the absorption time problem we are interested only in the number
of steps from an initial state to the (or a) absorbing state. If, on the way to absorption, some intermediate states
give the same contribution to the number of states, it makes sense to combine them into one large state and work with
this more coarse Markov chain. This idea to replace the original Markov chain by a coarser one which still correctly
describes the desired property (the number of steps in this case) is formalized by the lumpability property. 

Let the states of the Markov chain be $s_1, \ldots, s_N$. Consider a partition of this set of states, 
\begin{equation}\label{eq:StatesP}
    \{s_1, \ldots, s_N\} = S_1 \cup \ldots \cup S_M,
\end{equation}
so that $S_i \cap S_j = \varnothing$ for $i \not= j$. If this partition satisfies the condition that the probability
\begin{equation}\label{eq:TPS}
    \mathbf{P}(S_i \to S_j) = \sum_{s_l \in S_j} \mathbf{P}(s_m \to s_l)
\end{equation}
does not depend on $s_m \in S_i$, then the chain is called lumpable with respect to the partition \eqref{eq:StatesP}.
This is illustrated by Fig.~\ref{fig:lump}, where the sums of solid and dashed lines must be equal and their common
value is considered to be the new probability transition, denoted by the bold line. We can construct a new Markov chain
with the states $S_1, \ldots, S_M$ and the transition probabilities between them are given by Eq.~\eqref{eq:TPS}. The
new, coarser Markov chain has fewer states and thus is easier to deal with, so if it preserves the information we need,
we can work directly with this new chain instead of the original, larger one. We now give two concrete examples of
lumpability of Markov chains in the context of quantum repeaters.

\begin{figure}
    \includegraphics{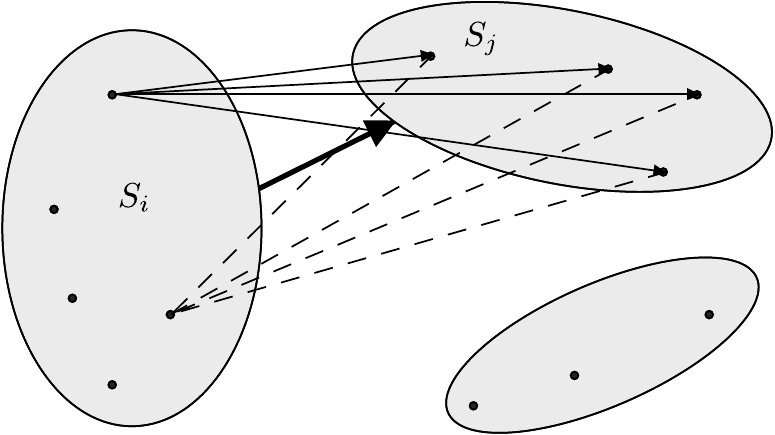}
    \caption{Illustration of the lumpability condition.}\label{fig:lump}
\end{figure}

\subsection{Deterministic swapping}

Consider a general quantum repeater with $n$ segments (which is not necessarily a power of two) with deterministic
swapping. In a previous section we studied the properties of such a repeater with a Markov chain with $2^n$ states,
which correspond to $n$-digit binary numbers. We show that it is possible to compress this chain to the size of $n+1$
and it will still contain the complete information about the waiting time of the corresponding repeater. Because
swapping never fails, all states with the same number of ready segments are equivalent. Let $S_i$, $i = 0, 1, \ldots,
n$, be the set of all $n$-binary digits with exactly $i$ components equal to one. Obviously, these sets form a partition
of the quantum repeater's set of states. The transition probabilities read as
\begin{equation}
    \mathbf{P}(S_i \to S_j) = 
    \begin{cases}
        0 & j < i \\
        \binom{n-i}{j-i} p^{j-i} q^{n-j} & i \leqslant j
    \end{cases}.
\end{equation}
We claim that the average waiting time of the repeater that started in the state $S_i$ is given by
\begin{equation}
    \overline{K}(i) = \sum^{n-i}_{l=1} (-1)^{l+1} \binom{n-i}{l} \frac{1}{1-q^l},
\end{equation}
which coincides with Eq.~\eqref{eq:KId} for any subset $I \subseteq [n]$ with $|I| = i$. We need to check that these
functions satisfy the linear equations
\begin{equation}
    1 + \sum^n_{j=i} \mathbf{P}(S_i \to S_j) \overline{K}(j) = \overline{K}(i).
\end{equation}
This can be easily done with the help of the following symbolic relation:
\begin{equation}
    \sum^{n-i}_{j=0} \,\,\, \sum^{n-i-j}_{l=1} = \sum^{n-i}_{l=1} \,\,\, \sum^{n-i-l}_{j=0}.
\end{equation}
In this case the original Markov chain had a lot of symmetry, which was exploited to reduce its size from exponential
with respect to the number of segments to a linear size. In the next subsection we show that a reduction in size is also
possible with less symmetric chains, though to a lesser degree.

\subsection{Doubling}

Consider a quantum repeater with non-deterministic swapping and a power-of-two number of segments $n = 2^d$ with the
standard doubling scheme. Before we demonstrated that this scheme requires a number of $N_n = 2^n$ states. This size is
better than the number of all states, $\mathcal{O}(2.61\ldots^n)$, but it is still impractical for $n > 8$. We now show
that we can describe this scheme by a Markov chain with $\mathcal{O}(1.34\ldots^n)$ states. This is still exponential
with respect to the number of segments, but it allows one to compute the exact rate of a quantum repeater with up to 32
segments.

\begin{figure}
    \includegraphics{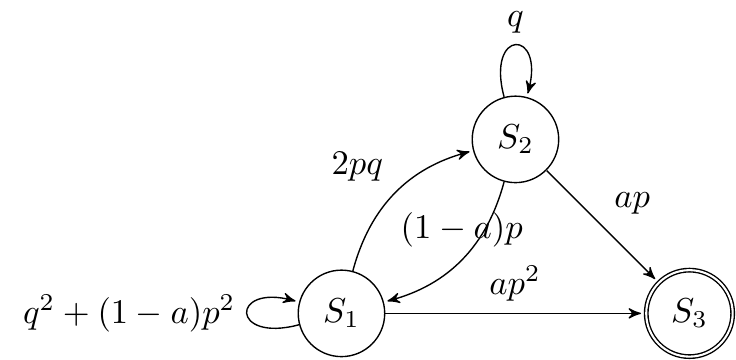}
    \caption{The compressed Markov chain of a 2-segment repeater.}\label{fig:M2C}
\end{figure}

Let us start with the simplest, 2-segment repeater. The doubling scheme in this case requires four states: $s_1 = 00$,
$s_2 = 01$, $s_3 = 10$ and $s_4 = \overline{11}$. Because the two halves of the repeater are independent in this scheme,
the states $01$ and $10$ are equivalent and can be grouped into one class. So, instead of four states, three are enough:
$S_1 = \{00\}$, $S_2 = \{01, 10\}$ and $S_3 = \{\overline{11}\}$. The corresponding Markov chain is shown in
Fig.~\ref{fig:M2C}. The 4-segment repeater can thus be described by nine states $S_i S_j$, $i, j = 1, \ldots, 3$ instead
of sixteen $s_i s_j$, $i, j = 1, \ldots, 4$. But we can go even further: the states $S_i S_j$ for $i \not= j$ are
equivalent and can be combined into one group, so six states are enough. These states are $\{S_1 S_1\}$, $\{S_1 S_2, S_2
S_1\}$, $\{S_1 S_3, S_3 S_1\}$, $\{S_2 S_2\}$, $\{S_2 S_3, S_3 S_2\}$, and $\{S_3 S_3\}$. This process is repeated on
higher levels. Instead of the recursive relation $N_{2n} = N^2_n$, we now have the relation 
\begin{equation}
    N_{2n} = \frac{1}{2}N_n(N_n + 1).
\end{equation}
There is no exact expression for the numbers defined by this recursive rule, but they can be well approximated as $N_n
\approx 2 \cdot 1.34\ldots^n$ \footnote{\texttt{https://oeis.org/A007501}}. For $n = 32$ the original Markov chain would
have more than $4 \cdot 10^9$ states, while the new, compressed chain has only $26796$ states. Solving the corresponding
system of linear equations requires around 20 Gb of RAM and takes approximately one minute of time on a modern 8-core
processor. Direct treatment of the original chain is intractable with the current technology.

\section{Verification}

Here we present a simple algorithm for Monte-Carlo simulation of a quantum repeater with a power-of-two number of
segments, $n = 2^d$, and a doubling scheme, which will allow us to validate our analytical results obtained before.
First, we need a function that simulates a single entanglement distribution over the quantum repeater. Its input
parameters are the level $d$, probabilities $p$ and $a$ and the boolean parameter $cc$ which says whether to include
classical communication or not. The output is the number of steps it took to distribute entanglement in this concrete
run, so the output will vary from run to run.

\begin{algorithmic}
    \Function{K}{$d$, $p$, $a$, $cc$}
        \State $k \gets 0$ \\
        \If {$d = 0$}
            \While {\textrm{true}}
                \State $k \gets k + 1$ \\
                \If {\Call{rand}{} $< p$}
                    \State \textrm{break}
                \EndIf
            \EndWhile
        \Else
            \While {\textrm{true}}
                \State $k_1 \gets $ \Call{K}{$d-1$, $p$, $a$, $cc$}
                \State $k_2 \gets $ \Call{K}{$d-1$, $p$, $a$, $cc$} \\
                \State $k \gets k + \max(k_1, k_2)$ \\
                \If {\Call{rand}{} $< a$}
                    \State \textrm{break}
                \Else
                    \If {$cc$}
                        \State $k \gets k + 2^{d-1}$
                    \EndIf
                \EndIf
            \EndWhile
        \EndIf \\
        \State \textbf{return} $k$
    \EndFunction
\end{algorithmic}

The function RAND here is a generator of pseudorandom numbers, uniformly distributed in the interval $[0, 1]$. The
function K implements the random variable $K$. To obtain the average waiting time, we need to call this
function many times and compute the average value of its output. One can also compute the variance and other higher
moments. Using this simple function, it is possible to verify that our analytical results are in very good agreement
with the results of the simulation.

\begin{figure}[ht]
    \includegraphics{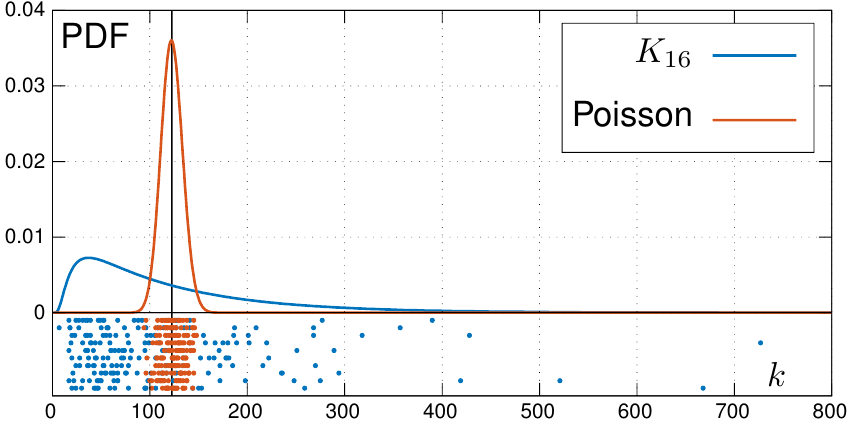}
    \caption{The PDF of $K_{16}$ (blue) with $p = a = 0.5$ vs. the Poisson PDF (red) with the same average value
    $\overline{K}_{16}$ (vertical).}\label{fig:qr-vs-P}
\end{figure}

This function can also be used to demonstrate the large variance of the waiting time. The results of calling it 200
times (10 series of experiments with 20 runs each) for a 16-segment quantum repeater are shown in the lower part of
Fig.~\ref{fig:qr-vs-P}. For comparison, we also plotted the same number of steps generated with a Poisson random number
generator with the same average. We can see the dramatic difference in the behavior of the two random variables. The
Poisson variable is compactly located around its average. On the other hand, the waiting time varies widely around the
average, and even in such a small number of experiments one can expect much shorter or much longer waiting times. The
short times are not problematic, but long waiting times can lead to failures of memory-based components. The upper half
of the figure shows the distributions of these random variables. The Poisson distribution has a well-known bell shape,
while the waiting time distribution significantly deviates from this shape. As a consequence, its average value alone is
not sufficient to give an acceptable characteristic of the waiting time random variable.

\section{Large repeaters}

The Markov chain approach developed above can give an exact result for small repeaters ($n \sim 2 - 4$), a numerical
result for medium-size repeaters ($n \sim 4 - 32$), but it is intractable for large repeaters ($n > 32$). For the latter
class of repeaters, we develop a family of approximations, where we can trade simplicity for accuracy and vice versa.

\begin{figure*}
    \centering
    \begin{subfigure}[b]{0.49\textwidth}
        \includegraphics[scale=0.95]{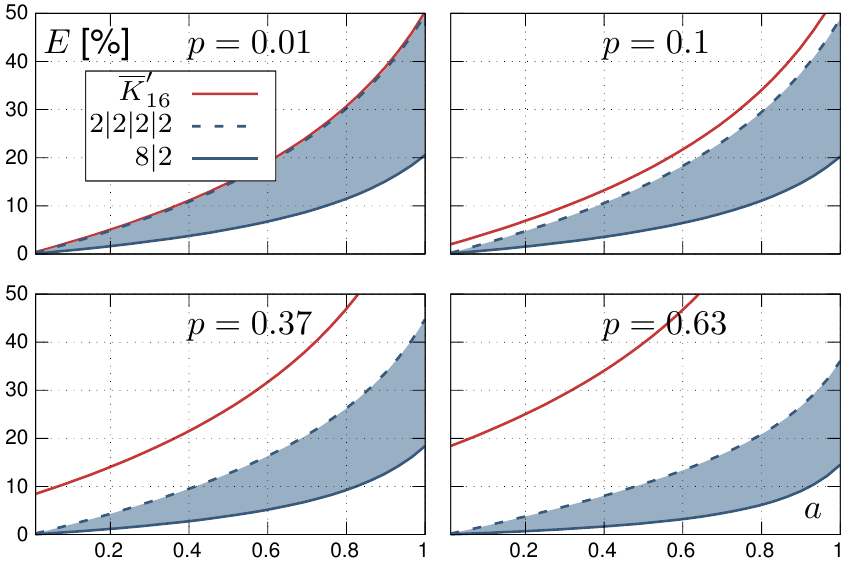}
    \end{subfigure}
    \begin{subfigure}[b]{0.49\textwidth}
        \includegraphics[scale=0.95]{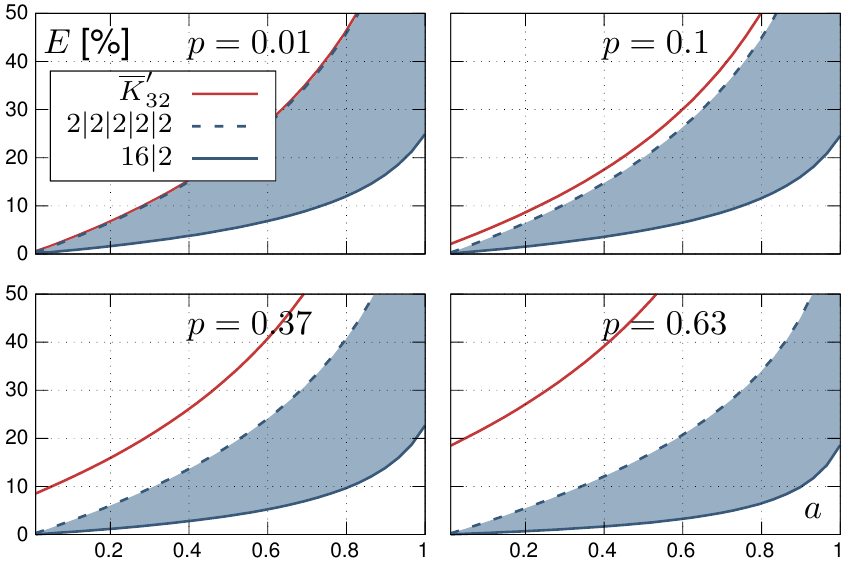}
    \end{subfigure}
   \caption{The relative error of different approximations as a function of $a$ for some values of $p$, 
   with $n=16$ (left) and $n=32$ (right) segments.}\label{fig:ea}
\end{figure*}

Generally, approximations are not necessary if we can compute exact values, but they are unavoidable when we cannot
obtain the result precisely. A simple way to get an approximation in the power-of-two case $n=2^d$ is to consider an
$n$-segment repeater as a $n/2$-segment one where each segment of the new repeater is a pair of the old repeater's
segments. Each new segment can be assigned an effective probability $p' = 1/\overline{K}_2$, and we have an approximate
relation 
\begin{equation}
    \overline{K}_n(p, a) \approx \overline{K}_{\frac{n}{2}}\left(\frac{1}{\overline{K}_2(p, a)}, a\right). 
\end{equation}
This scheme can be repeated and finally we have an approximation 
\begin{equation}
    \overline{K}_{2^d}(p, a) \approx \overline{K}_2\left(\frac{1}{\overline{K}_2\left(\frac{1}{\overline{K}_2(\ldots)}, a\right)}, a\right), 
\end{equation}
where on the right-hand side there are $d$ nested $\overline{K}_2$'s. A similar approach is used in
Ref.~\cite{arXiv.1811.01080} and also $\overline{K}'_n$ in Eq.~\eqref{eq:k2} is based on this kind of approach, however,
with the lowest-level waiting time approximated for small $p$. If $\overline{K}_2$ is the only average waiting time that
we know, then this approximation is the best that we can construct. If we know other averages $\overline{K}_4$,
$\overline{K}_8$, \ldots, then we can construct other, better approximations. For example, we have
\begin{equation}
  \overline{K}_{16}(p, a) \approx \overline{K}_8\left(\frac{1}{\overline{K}_2(p, a)}, a\right) 
  \approx \overline{K}_4\left(\frac{1}{\overline{K}_4(p, a)}, a\right), \nonumber
\end{equation}
and several others, where $\overline{K}$'s subindices are $2|8$, $2|4|2$, $2|2|4$, $2|2|2|2$ and $4|2|2$. To compare the
quality of different approximations, we shall use the quantity
\begin{equation}
    E = 100 \left|\frac{\text{approx.}}{\text{exact}} - 1\right|,
\end{equation}
which is the relative error measured in percents. In Fig.~\ref{fig:ea} we plot these approximations for some practical
values of $p = e^{-L_0/L_{\mathrm{att}}}$ with $L_{\mathrm{att}} = 22\mathrm{km}$ (corresponding to $L_0 = 100$, $50$,
$22$, and $10$ km, respectively, assuming deterministic local state preparations). As Fig.~\ref{fig:ea} illustrates, the
least precise approximation is $2|\ldots|2$, the most accurate is $n/2|2$ (where we introduce effective probabilities
only at the lowest level), and all other variants are in between these two (in Fig.~\ref{fig:ea} this area is shaded).
The approximation given by Eq.~\eqref{eq:k2} is the least precise. As one can see, the quality of the best approximation,
$n/2|2$, degrades not so quickly as that of the other approximations.

We considered approximations to $\overline{K}_{16}$, which we can compute exactly, only to demonstrate the quality of
different approximations. For $n \geqslant 64$ the problem of computing $\overline{K}_n$ with the Markov chains method
is intractable, so we have to use approximations. As we have shown, the best approximation is obtained by nesting exact
values of $\overline{K}_{n'}$ with as large $n'$ as possible. Moreover, inner $n'$ should not exceed outer ones. The
best available approximation to $\overline{K}_{1024}$ reads as
\begin{equation}
    \overline{K}_{1024}(p, a) \approx \overline{K}_{32}\left(\frac{1}{\overline{K}_{32}(p, a)}, a\right).
\end{equation}
The inverse, $1/\overline{K}_{1024}(p, a)$, gives us the best available approximation to the distribution rate.
Generally, increasingly better lower bounds on the (raw) repeater rates can be obtained by our method. Depending on
$L_0$, which typically is $10$ -- $100$ km, $n = 1024$ segments cover the distance $L = n L_0 > 10000$ km. Using
free-space satellite links to bridge these repeaters, it is probably unnecessary to use more than $1024$ segments, and
thus we will never need more than two nesting levels in the approximations of this form.

\section{Conclusions}

In summary, in addition to satellite-based long-distance quantum communication links, fiber-based quantum repeaters are
necessary for creating reliable large-scale quantum communication networks. Memory-based quantum repeaters are starting
to be experimentally realized. Up to now knowledge about the random waiting time of a general quantum repeater has been
incomplete and imprecise. Here we completely solve the problem of its probability distribution function and demonstrate
that, contrary to common belief, the waiting time cannot be accurately characterized by its average value alone. Our
approach is applicable to general quantum repeaters, including repeaters with finite memory effects and waiting times
spent for classical communication, and it allows one to obtain the full probability distribution of the waiting time,
which to a good approximation in a certain regime turns out to behave like a geometric progression. We expect that
precise knowledge about the waiting-time statistics also has a significant impact on the treatment of errors in a
quantum repeater such as those arising from memory dephasing. Additional probabilistic entanglement manipulations, such
as entanglement distillation for suppressing the propagation of errors, can also be incorporated into a repeater rate
analysis using our formalism. In the QKD context, in order to determine the secret key rate in a quantum repeater
system, the raw rates must be calculated and errors must be included via the secret key fraction. In our work, putting
it in this context, the focus has been on repeater raw rates which are highly non-trivial to compute in a general
quantum repeater with arbitrary, probabilistic entanglement swapping. Not only did we solve a long-standing problem,
eventually allowing for the assessment and creation of truly robust and reliable quantum devices, we also presented a
method for compressing Markov chains that can be useful for other applications as well.

\end{document}